\theoremstyle{definition}
\newtheorem{definition}{Definition}[section]
\newtheorem{theorem}[definition]{Theorem}
\newtheorem{lemma}[definition]{Lemma}
\newtheorem{proposition}[definition]{Proposition}
\newtheorem{remark}[definition]{Remark}
\newcommand{\diff}{\mathrm{d}}
\newcommand{\lie}[1]{\mathcal{L}_{#1}}
\newcommand{\cont}[1]{\iota_{#1}}
\newcommand{\C}{\mathbb{C}}
\newcommand{\R}{\mathbb{R}}
\newcommand{\Z}{\mathbb{Z}}
\newcommand{\CP}{\mathbb{CP}}
\newcommand{\sone}{S^{1}}
\title{Equivariant Cohomology, BRST Quantization, and Analytic Localization: A Unified Framework}
\author{Lixin Xu\footnote{lxxu@dlut.edu.cn}}
\affil{Institute of Theoretical Physics, School of Physics, Dalian University of Technology, Dalian, 116024, China}
\date{\today}
\begin{document}

\maketitle

\begin{abstract}
This paper provides a detailed exposition of the two main models for equivariant cohomology---the Cartan and Weil models---and their explicit isomorphism via the Kalkman (Mathai--Quillen) transformation. We then connect this framework to the BRST quantization of gauge theories, showing how the BRST complex can be identified with the Cartan model. Viewing both the Kalkman transformation and Witten's Morse-theoretic deformation as gauge-fixing procedures leads naturally to the \emph{equivariant Witten deformation}. This combined perspective yields a transparent analytic proof of the Atiyah--Bott--Berline--Vergne (ABBV) localization formula for integrals of equivariantly closed forms.The theory is richly illustrated with computations on $\mathbb{CP}^1$ and $\mathbb{CP}^n$, supplemented by explicit coordinate calculations. 
\end{abstract}

\section{Introduction}\label{sec:intro}
Equivariant cohomology represents a sophisticated algebraic-topological invariant for spaces equipped with group actions, encoding both the topology of the space and the symmetries of the acting group. Its development has been deeply intertwined with progress in differential geometry, symplectic geometry, and mathematical physics. 

Historically, equivariant cohomology emerged from the study of transformation groups and found profound applications in the localization of path integrals in topological field theories \cite{Szabo2000, Witten1982, AtiyahBott1984}. The theory provides a powerful framework for understanding Hamiltonian group actions, moment maps, and the Duistermaat--Heckman formula \cite{DuistermaatHeckman1982, BerlineVergne1982}.

Two complementary algebraic models dominate the literature: the \emph{Cartan model}, which employs polynomial maps from the Lie algebra to differential forms, and the \emph{Weil model}, constructed from the Weil algebra of the group. Their isomorphism, established via the Kalkman transformation (also known as the Mathai--Quillen isomorphism \cite{MathaiQuillen1986}), reveals deep connections with algebraic topology and homological algebra.

Parallel developments in theoretical physics led to the BRST (Becchi--Rouet--Stora--Tyutin) formalism \cite{BecchiRouetStora1975, Tyutin1975}, a cohomological approach to quantizing gauge theories with constraints. Remarkably, the BRST complex naturally identifies with the Cartan model when the gauge group acts on the field space, establishing a fundamental bridge between mathematical structures in geometry and physics.

On the analytic front, Witten's deformation of the de Rham differential using a Morse function \cite{Witten1982} provides a mechanism for localizing integrals to critical points, offering connections between differential topology, quantum mechanics, and supersymmetry.

The central contribution of this work is the synthesis of these perspectives through the observation that both the Kalkman transformation and Witten deformation can be interpreted as \emph{gauge-fixing} procedures within a unified BRST/BV framework. Their combination yields the \emph{equivariant Witten deformation}, which we employ to give an analytic derivation of the celebrated Atiyah--Bott--Berline--Vergne (ABBV) localization formula \cite{AtiyahBott1984, BerlineVergne1982}.

We begin by recalling the definitions and properties of the Cartan and Weil models, establishing the Kalkman isomorphism with detailed proofs (\S\ref{sec:models}). We then develop the precise correspondence with the BRST formalism, including the full BRST/BV complex (\S\ref{sec:brst}). After reviewing Witten's deformation and its gauge-fixing interpretation (\S\ref{sec:witten}), we construct the equivariant Witten deformation (\S\ref{sec:equiv-witten}) and employ it to prove the ABBV formula with rigorous estimates (\S\ref{sec:abbv-proof}). Detailed examples on $\mathbb{CP}^1$ and $\mathbb{CP}^n$ illustrate the abstract machinery (\S\ref{sec:examples}). We conclude with broader remarks on applications and future directions (\S\ref{sec:conclusion}).

\section{The Cartan and Weil Models of Equivariant Cohomology}\label{sec:models}
Let $M$ be an $n$-dimensional smooth manifold and $G$ a compact Lie group acting smoothly on $M$
\begin{equation}\label{eq:group-action}
    \Phi: G \times M \longrightarrow M, \qquad (g, x) \longmapsto g \cdot x.
\end{equation}
Denote by $\mathfrak{g} = \operatorname{Lie}(G)$ the Lie algebra of $G$. The infinitesimal action gives a Lie algebra homomorphism
\begin{equation}\label{eq:infinitesimal-action}
    \rho: \mathfrak{g} \longrightarrow \mathfrak{X}(M), \qquad \xi \longmapsto X_{\xi},
\end{equation}
where $X_{\xi}$ is the fundamental vector field defined by
\begin{equation}
X_{\xi}(x) = \left.\frac{d}{dt}\right|_{t=0} \exp(t\xi)\cdot x.
\end{equation}
The map $\rho$ satisfies $[X_{\xi}, X_{\eta}] = X_{[\xi,\eta]}$ for all $\xi, \eta \in \mathfrak{g}$.

\subsection{Cartan Model}\label{subsec:cartan}
For compact $G$, the \emph{Cartan model} provides a computationally tractable description of equivariant cohomology.

\begin{definition}[Cartan Complex]
The Cartan complex is defined as
\begin{equation}\label{eq:cartan-complex}
    \Omega^{\bullet}_{G}(M) \coloneqq \bigl(S(\mathfrak{g}^{*}) \otimes \Omega^{\bullet}(M)\bigr)^{G},
\end{equation}
where $S(\mathfrak{g}^{*})$ denotes the symmetric algebra on the dual Lie algebra, and the superscript $G$ indicates $G$-invariant elements under the diagonal action: the coadjoint action on $S(\mathfrak{g}^{*})$ and the pullback action on $\Omega^{\bullet}(M)$. Elements $\alpha \in \Omega^{k}_{G}(M)$ are equivariant polynomial maps
\begin{equation}\label{eq:polynomial-map}
    \alpha: \mathfrak{g} \longrightarrow \Omega^{\bullet}(M), \qquad \xi \longmapsto \alpha(\xi) = \sum_{i \ge 0} \alpha_{i}(\xi),
\end{equation}
where $\alpha_{i}(\xi)$ is homogeneous of degree $i$ in $\xi$ with values in $\Omega^{\bullet}(M)$, and $\deg_{\text{total}}(\alpha) = \deg_{\text{polynomial}} + \deg_{\text{form}} = k$.
\end{definition}

The \emph{equivariant differential} $d_{G}: \Omega^{k}_{G}(M) \to \Omega^{k+1}_{G}(M)$ is
\begin{equation}\label{eq:equivariant-differential}
    (d_{G}\alpha)(\xi) \coloneqq \diff\bigl(\alpha(\xi)\bigr) - \cont{X_{\xi}}\bigl(\alpha(\xi)\bigr).
\end{equation}

\begin{theorem}[Nilpotency of $d_G$]
The operator $d_G$ satisfies $d_G^2 =0$ on $\Omega_G^\bullet(M)$.
\end{theorem}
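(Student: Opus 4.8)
\noindent\emph{Proof plan.} The plan is to fix an arbitrary $\xi\in\mathfrak{g}$, notice that for fixed $\xi$ the operator $d_G$ collapses to an ordinary operator on $\Omega^{\bullet}(M)$, use $\diff^{2}=0$, $\cont{X_\xi}^{2}=0$ and Cartan's magic formula to reduce $d_G^{2}$ to a single Lie derivative, and finally invoke the $G$-invariance built into the definition of the Cartan complex to kill that term.

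First I would record that $d_G$ is well posed: since $\xi\mapsto X_\xi$ is linear and $\diff$ and $\cont{\cdot}$ are natural with respect to the $G$-action, $d_G$ raises the total degree by one and preserves the invariant subspace $(S(\mathfrak{g}^{*})\otimes\Omega^{\bullet}(M))^{G}$, so the assertion is meaningful. Now, for fixed $\xi$ set $D_\xi\coloneqq\diff-\cont{X_\xi}$ as an operator on $\Omega^{\bullet}(M)$. The definition of $d_G$ says exactly $(d_G\beta)(\xi)=D_\xi\bigl(\beta(\xi)\bigr)$ for any equivariant polynomial map $\beta$; applying this with $\beta=\alpha$ and then with $\beta=d_G\alpha$ gives $(d_G^{2}\alpha)(\xi)=D_\xi\bigl(D_\xi(\alpha(\xi))\bigr)=D_\xi^{2}\bigl(\alpha(\xi)\bigr)$. (There is no hidden $\xi$-derivative here: once $\xi$ is fixed, $(d_G\alpha)(\xi)$ is just a form, and $d_G$ acts on it again through the same $D_\xi$.) So everything comes down to the operator $D_\xi^{2}$ on forms.

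Expanding, $D_\xi^{2}=\diff^{2}-(\diff\,\cont{X_\xi}+\cont{X_\xi}\,\diff)+\cont{X_\xi}\cont{X_\xi}$. Here $\diff^{2}=0$ is nilpotency of the de Rham differential, and $\cont{X_\xi}\cont{X_\xi}=0$ because interior multiplication is an odd derivation, so its square is antisymmetric in its (equal) arguments and hence vanishes. The surviving middle term is, by Cartan's magic formula, precisely $\lie{X_\xi}$, so
\[
(d_G^{2}\alpha)(\xi)=-\,\lie{X_\xi}\bigl(\alpha(\xi)\bigr).
\]

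The remaining step is to see that $\lie{X_\xi}(\alpha(\xi))=0$ when $\alpha\in\Omega_G^{\bullet}(M)$, and this is where I expect the only real subtlety to lie: \emph{without} invariance the right-hand side above is genuinely nonzero, which is the standard statement that $d_G$ squares to a Lie derivative on the non-invariant complex, so the argument must use the superscript $G$ in an essential way. Concretely, $\alpha$ being $G$-invariant means, after unwinding, that the polynomial map $\alpha\colon\mathfrak{g}\to\Omega^{\bullet}(M)$ is equivariant: $\alpha(\operatorname{Ad}_{g}\xi)=\Phi_{g}^{*}\alpha(\xi)$ for all $g\in G$ (or the same relation with $g^{-1}$, according to one's pullback convention). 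Differentiating this identity along $g=\exp(t\eta)$ at $t=0$: the left-hand side becomes $\left.\tfrac{d}{dt}\right|_{0}\alpha\bigl(\operatorname{Ad}_{\exp(t\eta)}\xi\bigr)$, which by the chain rule is the directional derivative of $\alpha$ at $\xi$ along $\left.\tfrac{d}{dt}\right|_{0}\operatorname{Ad}_{\exp(t\eta)}\xi=[\eta,\xi]$, while the right-hand side becomes $\pm\lie{X_\eta}(\alpha(\xi))$. Setting $\eta=\xi$ makes $[\xi,\xi]=0$, so the left-hand side vanishes and we obtain $\lie{X_\xi}(\alpha(\xi))=0$; combined with the previous paragraph this yields $(d_G^{2}\alpha)(\xi)=0$ for every $\xi$, i.e. $d_G^{2}=0$ on $\Omega_G^{\bullet}(M)$. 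The one thing to watch is the bookkeeping of conventions — the direction of the $G$-action on forms ($\Phi_{g}^{*}$ versus $\Phi_{g^{-1}}^{*}$) and the resulting sign of $\lie{X_\eta}$ — but the conclusion is insensitive to these choices because we ultimately evaluate at $\eta=\xi$.
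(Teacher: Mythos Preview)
Your proof is correct and follows essentially the same route as the paper's: expand $(d_G^{2}\alpha)(\xi)$ using $\diff^{2}=0$, $\cont{X_\xi}^{2}=0$, and Cartan's formula to obtain $-\lie{X_\xi}\alpha(\xi)$, then invoke $G$-invariance. Your treatment is in fact more thorough than the paper's, since you explicitly justify why $\lie{X_\xi}\alpha(\xi)=0$ by differentiating the equivariance relation and setting $\eta=\xi$, whereas the paper simply asserts this vanishing.
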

\begin{proof}
For $\alpha \in \Omega_G^\bullet(M)$, compute
\begin{align*}
(d_G^2 \alpha)(\xi) &= d_G\bigl((d_G\alpha)(\xi)\bigr) \\
&= \diff\bigl((d_G\alpha)(\xi)\bigr) - \cont{X_\xi}\bigl((d_G\alpha)(\xi)\bigr) \\
&= \diff\bigl(\diff\alpha(\xi) - \cont{X_\xi}\alpha(\xi)\bigr) - \cont{X_\xi}\bigl(\diff\alpha(\xi) - \cont{X_\xi}\alpha(\xi)\bigr) \\
&= \underbrace{\diff^2\alpha(\xi)}_{=0} - \diff\cont{X_\xi}\alpha(\xi) - \cont{X_\xi}\diff\alpha(\xi) + \underbrace{\cont{X_\xi}\cont{X_\xi}\alpha(\xi)}_{=0} \\
&= -(\diff\cont{X_\xi} + \cont{X_\xi}\diff)\alpha(\xi) \\
&= -\lie{X_\xi}\alpha(\xi).
\end{align*}
Since $\alpha$ is $G$-invariant, $\lie{X_\xi}\alpha(\xi) = 0$ for all $\xi \in \mathfrak{g}$.
\end{proof}

The cohomology of $(\Omega^{\bullet}_{G}(M), d_G)$ is the \emph{equivariant cohomology} $H^{\bullet}_{G}(M;\R)$.

\begin{remark}
The Cartan model is particularly suited for computations because it works directly with forms on $M$. The term $-\cont{X_{\xi}}$ can be viewed as a "twist" of the de Rham differential encoding the group action. For $G$ compact, this model computes the Borel equivariant cohomology $H^\bullet_G(M) \cong H^\bullet(EG \times_G M)$.
\end{remark}

\subsection{Weil Model}\label{subsec:weil}
The Weil algebra provides a universal model for the differential forms on the classifying space $BG$.

\begin{definition}[Weil Algebra]
The Weil algebra of $\mathfrak{g}$ is the graded commutative algebra
\begin{equation}\label{eq:weil-algebra}
    W(\mathfrak{g}) \coloneqq \Lambda(\mathfrak{g}^{*}) \otimes S(\mathfrak{g}^{*}),
\end{equation}
where $\Lambda(\mathfrak{g}^{*})$ is the exterior algebra (generators $\theta^{a}$ of degree $1$, "connection forms") and $S(\mathfrak{g}^{*})$ is the symmetric algebra (generators $\phi^{a}$ of degree $2$, "curvature forms"). 
\end{definition}

Choose a basis $\{e_a\}$ of $\mathfrak{g}$ with dual basis $\{\theta^{a}\}$ of $\mathfrak{g}^{*}$, and structure constants $f^{a}_{bc}$ defined by $[e_b, e_c] = f^{a}_{bc} e_a$. The differential $d_W$ is
\begin{align}\label{eq:weil-differential}
    d_{W}\theta^{a} &= \phi^{a} - \frac{1}{2} f^{a}_{bc}\, \theta^{b}\theta^{c}, \\
    d_{W}\phi^{a} &= - f^{a}_{bc}\, \theta^{b}\phi^{c}.
\end{align}

\begin{theorem}[Structure of Weil Algebra]
$(W(\mathfrak{g}), d_W)$ is a commutative differential graded algebra with $d_W^2 = 0$, and it is acyclic: $H^k(W(\mathfrak{g})) = 0$ for $k>0$, $H^0(W(\mathfrak{g})) = \R$.
\end{theorem}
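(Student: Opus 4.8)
The plan is to establish the two claimed properties — that $(W(\mathfrak{g}), d_W)$ is a CDGA with $d_W^2=0$, and that it is acyclic — separately, since they require genuinely different arguments.

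\paragraph{Nilpotency.}
First I would verify $d_W^2 = 0$ by a direct computation on generators, extending to all of $W(\mathfrak{g})$ by the graded Leibniz rule (which forces $d_W^2$ to be a derivation, hence determined by its values on generators). On $\theta^a$ we get
\begin{align*}
d_W^2 \theta^a &= d_W\Bigl(\phi^a - \tfrac{1}{2} f^a_{bc}\,\theta^b\theta^c\Bigr) \\
&= -f^a_{bc}\,\theta^b\phi^c - \tfrac{1}{2} f^a_{bc}\bigl((d_W\theta^b)\theta^c - \theta^b (d_W\theta^c)\bigr),
\end{align*}
and substituting $d_W\theta^b = \phi^b - \tfrac12 f^b_{de}\theta^d\theta^e$ the $\phi\theta$-terms cancel against $-f^a_{bc}\theta^b\phi^c$ after relabeling and using antisymmetry $f^a_{bc} = -f^a_{cb}$, while the purely cubic $\theta^3$-terms vanish by the Jacobi identity $f^a_{b[c}f^b_{de]} = 0$. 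The check $d_W^2\phi^a = 0$ is similar and again collapses to Jacobi. The commutativity of the algebra and compatibility of $d_W$ with the grading ($d_W$ raises degree by $1$) are immediate from the definitions.

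\paragraph{Acyclicity.}
For acyclicity the cleanest route is to exhibit a contracting homotopy, i.e.\ an operator $h$ of degree $-1$ with $d_W h + h d_W = \mathrm{id}$ on positive-degree elements. The natural choice is the ``Koszul'' homotopy determined by $h(\theta^a) = 0$ and $h(\phi^a) = \theta^a$, extended as an appropriate (anti)derivation with respect to a filtration by symmetric degree. Concretely, one introduces the derivation $h$ characterized on generators this way and checks that $d_W h + h d_W$ is a derivation acting as the identity on each generator; since the commutator of a derivation of degree $+1$ with one of degree $-1$ is again a derivation, it suffices to verify this on $\theta^a$ and $\phi^a$, and then argue that a derivation which is the identity on generators and kills constants is multiplication by the (symmetric $+$ exterior) polynomial degree — hence invertible on each homogeneous positive-degree piece, which yields a genuine contracting homotopy after rescaling. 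An alternative is to filter $W(\mathfrak{g})$ by exterior degree, observe that the associated graded differential is the Koszul differential $\theta^a \mapsto \phi^a$ on $\Lambda(\mathfrak{g}^*)\otimes S(\mathfrak{g}^*)$, invoke the standard acyclicity of the Koszul complex of a polynomial ring, and run a spectral sequence / five-lemma argument to lift the result. A third, slicker option: recognize $W(\mathfrak{g})$ as the Chevalley–Eilenberg-type model and note it is the free acyclic object on $\mathfrak{g}^*$ in a suitable category — but for a self-contained exposition the explicit homotopy is preferable.

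\paragraph{Main obstacle.}
The nilpotency computation, while routine, has the usual bookkeeping pitfall of sign conventions in the graded Leibniz rule and the placement of the factor $\tfrac12$; I would be careful that the structure constants are genuinely antisymmetric and that Jacobi is invoked in the correct index-symmetrized form. The real conceptual point — and the step I expect to take the most care — is the acyclicity: one must make sure the homotopy $h$ is consistently defined as a derivation (not merely on generators, where the extension is forced, but in checking that $d_W h + h d_W$ multiplies by total degree one must track the mixed terms where $h$ hits a $\phi$ inside a product containing $\theta$'s). Getting the ``$d_W h + h d_W = (\deg)\cdot\mathrm{id}$'' identity exactly right, so that dividing by the degree is legitimate in positive degrees, is the crux; equivalently, if one goes the spectral sequence route, the obstacle is justifying convergence/degeneration, which here is trivial since the filtration is bounded in each total degree, but it should be stated.
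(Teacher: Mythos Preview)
Your approach matches the paper's essentially verbatim: direct verification of $d_W^2=0$ on generators via the Jacobi identity, and acyclicity via a contracting homotopy (the paper in fact gives no formula for the homotopy, so your proposal is strictly more detailed).

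One concrete caution, since you flag the homotopy identity as the crux: with $h(\theta^a)=0$, $h(\phi^a)=\theta^a$ extended as an odd derivation, the check on $\phi^a$ does \emph{not} give the identity. One finds
\[
(d_W h + h d_W)(\phi^a) = d_W\theta^a + h(-f^a_{bc}\theta^b\phi^c)
= \phi^a - \tfrac12 f^a_{bc}\theta^b\theta^c + f^a_{bc}\theta^b\theta^c
= \phi^a + \tfrac12 f^a_{bc}\theta^b\theta^c,
\]
so $d_W h + h d_W$ is not the number operator in these generators. The standard fix is to change variables to $\tilde\phi^a \coloneqq d_W\theta^a = \phi^a - \tfrac12 f^a_{bc}\theta^b\theta^c$; since $\phi^a \mapsto \tilde\phi^a$ is an algebra automorphism (unipotent, lower-triangular in exterior degree), $W(\mathfrak{g})$ is free on $(\theta^a,\tilde\phi^a)$ with the trivial differential $d_W\theta^a=\tilde\phi^a$, $d_W\tilde\phi^a=0$. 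Now $h(\theta^a)=0$, $h(\tilde\phi^a)=\theta^a$ gives exactly $(d_Wh+hd_W)=N$ on generators, hence the word-length operator, and your rescaling argument goes through. Your filtration alternative is really the same observation in disguise: filtering by symmetric degree makes the associated graded precisely this Koszul complex.
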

\begin{proof}
Compute $d_W^2 \theta^a = d_W (\phi^a - \frac{1}{2} f^a_{bc} \theta^b \theta^c) = - f^a_{bc} \theta^b \phi^c - \frac{1}{2} f^a_{bc} (d_W \theta^b) \theta^c + \frac{1}{2} f^a_{bc} \theta^b (d_W \theta^c)$.
Substituting $d_W \theta^b = \phi^b - \frac{1}{2} f^b_{de} \theta^d \theta^e$, and using the Jacobi identity for the structure constants, one verifies that all terms cancel. Similarly for $d_W^2 \phi^a$. Acyclicity follows from the existence of a contracting homotopy.
\end{proof}

The \emph{Weil model} for $G$-equivariant cohomology uses the complex
\begin{equation}\label{eq:weil-model}
    \bigl(W(\mathfrak{g}) \otimes \Omega^{\bullet}(M)\bigr)_{\text{basic}},
\end{equation}
where "basic" means elements that are
\begin{enumerate}
    \item \emph{Horizontal}: $\cont{e_a}\alpha = 0$ for all $a$, where $\cont{e_a}$ acts by contraction on the $\theta^a$ factor.
    \item \emph{Invariant}: $\lie{e_a}\alpha = 0$ for all $a$, where $\lie{e_a} = [d_W, \cont{e_a}]$.
\end{enumerate}
The differential is $d_W \otimes 1 + 1 \otimes \diff$.

\subsection{Mathai--Quillen Isomorphism (Kalkman Transformation)}\label{subsec:kalkman}
The Cartan and Weil models are isomorphic via an explicit transformation.

\begin{definition}[Kalkman Transformation]
Define $\kappa: W(\mathfrak{g}) \otimes \Omega^{\bullet}(M) \to W(\mathfrak{g}) \otimes \Omega^{\bullet}(M)$ by
\begin{equation}\label{eq:kalkman}
    \kappa \coloneqq \exp(-\cont{\theta}), \quad \text{where } \cont{\theta} \coloneqq \theta^{a} \otimes \cont{X_{a}},
\end{equation}
with $X_a = X_{e_a}$. This is a graded algebra isomorphism.
\end{definition}

\begin{theorem}[Intertwining Property]
$\kappa$ intertwines the differentials
\begin{equation}\label{eq:kalkman-intertwines}
    \kappa \circ (d_W \otimes 1 + 1 \otimes \diff) \circ \kappa^{-1} = d_{G},
\end{equation}
and restricts to an isomorphism between the basic subcomplex of the Weil model and the Cartan complex.
\end{theorem}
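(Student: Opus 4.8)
The plan is to establish both parts by conjugating, one operator at a time, the structures that define the two models. Since $\cont{\theta}=\theta^{a}\otimes\cont{X_{a}}$ strictly lowers the de Rham degree it is nilpotent on $W(\mathfrak{g})\otimes\Omega^{\bullet}(M)$, so $\kappa=\exp(-\cont{\theta})$ and $\kappa^{-1}=\exp(\cont{\theta})$ are finite sums, and --- together with the fact recorded above that $\kappa$ is a graded algebra automorphism --- the problem reduces to transporting the relevant operators through the truncating series $\kappa\,A\,\kappa^{-1}=e^{-\operatorname{ad}_{\cont{\theta}}}A=A-[\cont{\theta},A]+\tfrac{1}{2}[\cont{\theta},[\cont{\theta},A]]-\cdots$. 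Concretely there are two tasks: (i) conjugate the operators $\cont{e_{a}}$ and $\lie{e_{a}}$ that cut out the basic subcomplex, so as to identify $\kappa(\text{basic})$ with the Cartan complex $\Omega^{\bullet}_{G}(M)$; and (ii) conjugate $d_{W}\otimes1+1\otimes\diff$ and show the result restricts to $d_{G}$ on that image.

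For (i) the key inputs are the seed commutators --- the genuinely computational step, where the Koszul signs must be handled with care: $[\cont{\theta},\,1\otimes\diff]=\sum_{a}\theta^{a}\otimes\lie{X_{a}}$; $[\cont{\theta},\,d_{W}\otimes1]=-\sum_{a}(d_{W}\theta^{a})\otimes\cont{X_{a}}=-\sum_{a}\phi^{a}\otimes\cont{X_{a}}+\tfrac{1}{2}f^{a}_{bc}\,\theta^{b}\theta^{c}\otimes\cont{X_{a}}$; and $[\cont{\theta},\,\cont{e_{a}}]=\pm\,1\otimes\cont{X_{a}}$, so the double bracket vanishes and conjugation by $\kappa$ turns the total contraction $\cont{e_{a}}$ into contraction on the $\theta$-factor alone. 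Moreover $\cont{\theta}$, being assembled from the canonical invariant element of $\mathfrak{g}^{*}\otimes\mathfrak{g}$, is $G$-invariant as an operator, so $\kappa$ commutes with every $\lie{e_{a}}$. Hence $\alpha$ is horizontal $\iff$ $\kappa\alpha$ is free of $\theta$'s, i.e.\ $\kappa\alpha\in S(\mathfrak{g}^{*})\otimes\Omega^{\bullet}(M)$, and a horizontal $\alpha$ is in addition invariant $\iff$ $\kappa\alpha$ is $G$-invariant; therefore $\kappa$ carries the basic subcomplex bijectively onto $\bigl(S(\mathfrak{g}^{*})\otimes\Omega^{\bullet}(M)\bigr)^{G}=\Omega^{\bullet}_{G}(M)$, and since $\cont{\theta}$ has total degree zero and $\kappa$ is an algebra map this is a degree-preserving algebra isomorphism.

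For (ii) I would avoid writing $\kappa\,(d_{W}\otimes1+1\otimes\diff)\,\kappa^{-1}$ out in full and argue instead by counting $\theta$-factors. Write $d_{\mathrm{tot}}:=d_{W}\otimes1+1\otimes\diff$. First, $d_{\mathrm{tot}}$ preserves the basic subcomplex (horizontality of $d_{\mathrm{tot}}\alpha$ from Cartan's formula $\cont{e_{a}}d_{\mathrm{tot}}=\lie{e_{a}}-d_{\mathrm{tot}}\cont{e_{a}}$, invariance from $[\lie{e_{a}},d_{\mathrm{tot}}]=0$), so $D:=\kappa\,d_{\mathrm{tot}}\,\kappa^{-1}$ preserves $\kappa(\text{basic})\subseteq S(\mathfrak{g}^{*})\otimes\Omega^{\bullet}(M)$, where no $\theta$'s occur. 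Now $\cont{\theta}$ raises the $\theta$-count by one; $d_{W}\otimes1$ applied to a $\theta$-free element produces only $\theta$-positive terms (because $d_{W}\phi^{a}=-f^{a}_{bc}\theta^{b}\phi^{c}$); the terms $\sum_{a}\theta^{a}\otimes\lie{X_{a}}$ and $\tfrac{1}{2}f^{a}_{bc}\theta^{b}\theta^{c}\otimes\cont{X_{a}}$ are themselves $\theta$-positive; and every iterated bracket in the $\operatorname{ad}_{\cont{\theta}}$-series carries at least one $\cont{\theta}$. Hence, for any $\theta$-free $x$, $Dx=\bigl(1\otimes\diff-\sum_{a}\phi^{a}\otimes\cont{X_{a}}\bigr)x$ plus a sum of $\theta$-positive terms; but when $x\in\kappa(\text{basic})$ we have $Dx\in\kappa(\text{basic})$, which is $\theta$-free, so that remainder must vanish. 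Finally, under the identification of the generators $\phi^{a}$ with the linear coordinates $\xi^{a}$ on $\mathfrak{g}$, the operator $1\otimes\diff-\sum_{a}\phi^{a}\otimes\cont{X_{a}}$ sends $\alpha$ to $\xi\mapsto\diff\alpha(\xi)-\cont{X_{\xi}}\alpha(\xi)$, i.e.\ it \emph{is} $d_{G}$; thus $D|_{\kappa(\text{basic})}=d_{G}$, and transporting back along $\kappa$ gives $\kappa\circ d_{\mathrm{tot}}=d_{G}\circ\kappa$ on the basic subcomplex. Together with (i), this proves the theorem.

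The obstacle I expect to matter is bookkeeping rather than ideas: pinning down the Koszul signs in the three seed commutators (and the compatible sign convention for $\kappa$ versus $d_{G}$), and --- more delicately --- making the $\theta$-degree argument watertight, i.e.\ verifying that no iterated bracket $\operatorname{ad}_{\cont{\theta}}^{k}(d_{W}\otimes1)$ with $k\ge1$ can, after acting on a $\theta$-free element, contribute a $\theta$-free term competing with $-\sum_{a}\phi^{a}\otimes\cont{X_{a}}$. Should that accounting become awkward, an alternative is to exploit that $\kappa$ is an algebra homomorphism and check the intertwining directly on the generators $\theta^{a}$, $\phi^{a}$ and on $\Omega^{\bullet}(M)$ --- where $\kappa$, $d_{\mathrm{tot}}$ and $d_{G}$ all have explicit closed forms --- and then conclude by the Leibniz rule.
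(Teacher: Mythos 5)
Your proposal is correct in substance, but it is organized quite differently from the paper's proof, and in fact it is more complete. The paper conjugates $d_W\otimes 1+1\otimes\diff$ by $\kappa$ directly via the BCH/adjoint expansion, asserts the closed form of the result, and then remarks that on basic elements the $\theta$-terms drop, leaving $d_G$; the restriction statement (basic subcomplex $\leftrightarrow$ Cartan complex) is not argued separately. You instead (i) first transport the operators $\cont{e_a}$ and $\lie{e_a}$ that define "basic'', showing $\kappa(\text{basic})=\bigl(S(\mathfrak{g}^{*})\otimes\Omega^{\bullet}(M)\bigr)^{G}$, and then (ii) identify the conjugated differential on that image by a $\theta$-degree filtration argument, using that it must preserve the $\theta$-free image, rather than computing the full conjugation in closed form. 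What your route buys is twofold: it proves the second half of the theorem explicitly, and it sidesteps the delicate cancellation of structure-constant terms that the closed-form computation requires. Notably, your seed commutators are the accurate ones: the Lie-derivative term arises from $[\cont{\theta},1\otimes\diff]=\theta^{a}\otimes\lie{X_{a}}$ (so the paper's parenthetical claim that this bracket vanishes "since $\diff$ and contractions commute'' is not right as stated), the term $-\phi^{a}\otimes\cont{X_{a}}$ comes from $[\cont{\theta},d_W\otimes1]$ together with a quadratic-in-$\theta$ piece, and the higher brackets do not vanish identically for nonabelian $G$ --- the quadratic pieces cancel only after the second-order term is included, or are rendered harmless by exactly the $\theta$-degree argument you give. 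The one point you flag yourself --- that no iterated bracket $\operatorname{ad}_{\cont{\theta}}^{k}(d_W\otimes1)$, $k\ge 2$, contributes a $\theta$-free term --- does check out ($[\cont{\theta},\phi^{a}\otimes\cont{X_{a}}]=0$ and $[\cont{\theta},\theta^{b}\theta^{c}\otimes\cont{X_{a}}]=0$), so apart from fixing a consistent sign convention relating $\kappa$, the tensor-product Koszul signs, and the identification $\phi^{a}\leftrightarrow\xi^{a}$, your argument is sound and would serve as a rigorous replacement for the sketch in the text.
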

\begin{proof}
Since $\cont{\theta}$ is a derivation, we use the Baker--Campbell--Hausdorff formula:
\begin{equation}
e^{-\cont{\theta}} (d_W + \diff) e^{\cont{\theta}} = d_W + \diff + [d_W + \diff, -\cont{\theta}] + \frac{1}{2}[[d_W + \diff, -\cont{\theta}], -\cont{\theta}] + \cdots.
\end{equation}
Compute $[d_W, \cont{\theta}] = - \lie{\theta}$ and $[\diff, \cont{\theta}] = 0$ (since $\diff$ and contractions commute). The higher commutators vanish due to $[\lie{\theta}, \cont{\theta}] = 0$. The result simplifies to $d_W + \diff - \lie{\theta}$. After identifying $\phi^a$ with elements of $S(\mathfrak{g}^*)$ and noting that on basic elements the $\theta^a$ terms vanish, we recover $d_G$.
\end{proof}

Hence $\kappa$ induces an isomorphism of cochain complexes and, consequently, an isomorphism $H^{\bullet}_{G,\text{Weil}}(M) \simeq H^{\bullet}_{G,\text{Cartan}}(M)$.

\begin{remark}
The Kalkman transformation eliminates the connection-like generators $\theta^{a}$ in favor of curvature-like generators $\phi^{a}$, introducing the contraction term into the differential. This mirrors the passage from connection to curvature in differential geometry.
\end{remark}

\section{Relation with the BRST Formalism}\label{sec:brst}
The BRST formalism provides a cohomological approach to quantizing gauge theories. Remarkably, it coincides with the Cartan model when the gauge group acts on the space of fields.

\subsection{BRST Complex}
Consider a gauge theory with gauge group $G$ acting on the space of fields $\mathcal{F}$. The minimal BRST complex introduces:
\begin{itemize}
    \item Ghost fields $c^a$ (odd, ghost number $+1$)
    \item Antighost fields $\bar{c}^a$ (odd, ghost number $-1$)
    \item Nakanishi--Lautrup fields $B^a$ (even, ghost number $0$)
\end{itemize}
The BRST operator $s$ acts as:
\begin{equation}\label{eq:brst-operator}
    s = \delta + c^{a} \lie{X_{a}} - \frac{1}{2} f^{a}_{bc}\, c^{b}c^{c} \frac{\partial}{\partial c^{a}},
\end{equation}
where $\delta$ is the de Rham differential on $\mathcal{F}$, and $\lie{X_a}$ is the Lie derivative along gauge orbits, the last term is the Chevalley--Eilenberg differential on the ghost space.

\begin{table}[ht]
\centering
\renewcommand{\arraystretch}{1.3}
\begin{tabular}{l l}
\toprule
\textbf{Weil algebra} & \textbf{BRST formalism} \\
\midrule
Connection $\theta^{a}$ (odd, deg $1$) & Ghost field $c^{a}$ (odd, ghost number $+1$) \\
Curvature $\phi^{a}$ (even, deg $2$) & Auxiliary field $B^{a}$ (even, ghost number $0$) \\
Differential $d_{W}$ & BRST operator $s$ \\
Horizontal condition & Gauge invariance \\
Invariant condition & Physical state condition \\
Basic subcomplex & BRST cohomology in degree 0 \\
\bottomrule
\end{tabular}
\caption{Comprehensive correspondence between Weil algebra and BRST formalism.}
\label{tab:WeilBRST}
\end{table}

\subsection{Batalin--Vilkovisky Formalism}
The full BV complex extends this structure
\begin{equation}
\Phi_{\text{BV}} = (\text{fields}) \oplus (\text{anti-fields}) \oplus (\text{ghosts}) \oplus (\text{anti-ghosts}).
\end{equation}
The BV differential $s_{\text{BV}}$ squares to zero and encodes both the gauge symmetry and the equations of motion.

\begin{theorem}[BRST as Equivariant Cohomology]
The BRST cohomology of a gauge theory is isomorphic to the $G$-equivariant cohomology of the field space $\mathcal{F}$.
\end{theorem}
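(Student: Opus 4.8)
The plan is to realize the BRST complex of the gauge theory as a concrete incarnation of the Weil model of Section~\ref{sec:models} for the $G$-action on the field space $\mathcal{F}$, and then transport the conclusion through the Kalkman isomorphism of Section~\ref{subsec:kalkman}. First I would fix the dictionary of Table~\ref{tab:WeilBRST} at the level of complexes: adjoin to $\Omega^\bullet(\mathcal{F})$ the ghost generators $c^a$ (odd, degree $1$) together with the curvature-like auxiliary generators $B^a$ (even, degree $2$) and set $c^a \leftrightarrow \theta^a$, $B^a \leftrightarrow \phi^a$, so that the full algebra of minimal BRST fields is exactly $W(\mathfrak{g}) \otimes \Omega^\bullet(\mathcal{F})$. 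The antighosts $\bar c^a$, together with their BRST variations $s\bar c^a \sim B^a$, form contractible (``quartet'') pairs; by the standard homological lemma on trivial pairs they do not change the cohomology, so I would discard them for the computation and reinstate them only when an explicit gauge-fixing fermion is needed.

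Next I would match the differentials. The operator displayed in \eqref{eq:brst-operator} is the minimal truncation acting on the ghost sector; I would complete it by $s c^a = B^a - \tfrac12 f^a_{bc} c^b c^c$ and $s B^a = - f^a_{bc} c^b B^c$, which are precisely the Weil relations \eqref{eq:weil-differential} under the dictionary above. With $\delta = \diff$ the de Rham differential on $\mathcal{F}$, the term $c^a \lie{X_a}$ carrying the module structure, and the Chevalley--Eilenberg term of \eqref{eq:brst-operator} supplying the structure-constant part, one obtains $s = d_W \otimes 1 + 1 \otimes \diff$ on $W(\mathfrak{g}) \otimes \Omega^\bullet(\mathcal{F})$. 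Nilpotency $s^2 = 0$ is then nothing but the statement already proved for the Weil differential (ultimately the Jacobi identity for $f^a_{bc}$).

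Then I would isolate the physical sector. Gauge invariance of a BRST observable is the horizontality condition $\cont{e_a}\alpha = 0$ (no free ghost legs), and the physical-state condition is invariance $\lie{e_a}\alpha = 0$; together these carve out the basic subcomplex $\bigl(W(\mathfrak{g}) \otimes \Omega^\bullet(\mathcal{F})\bigr)_{\text{basic}}$ of \eqref{eq:weil-model}, which by Section~\ref{subsec:weil} computes $H^\bullet_{G,\text{Weil}}(\mathcal{F})$. By the intertwining theorem of Section~\ref{subsec:kalkman}, the Kalkman map $\kappa$ identifies this with the Cartan complex $\Omega^\bullet_G(\mathcal{F})$ of \eqref{eq:cartan-complex}, whose cohomology is $H^\bullet_G(\mathcal{F};\R)$; composing the isomorphisms yields $H^\bullet_{\mathrm{BRST}}(\mathcal{F}) \cong H^\bullet_G(\mathcal{F};\R)$, which is the assertion.

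The hard part will be analytic rather than algebraic: $\mathcal{F}$ (and usually the gauge group itself) is infinite-dimensional, so $\Omega^\bullet(\mathcal{F})$, the symbol $H^\bullet_G(\mathcal{F})$, and especially the contracting homotopy behind the acyclicity of $W(\mathfrak{g})$ must be given a precise meaning before the finite-dimensional arguments of Section~\ref{sec:models} apply. I would close this gap either by restricting to local (formal) functionals and working with a Gelfand--Fuks-type complex, or by passing to a local $G$-slice — equivalently the orbit space $\mathcal{F}/G$ — where the action is locally free and $H^\bullet_G(\mathcal{F})$ reduces to $H^\bullet(\mathcal{F}/G)$. A further subtlety to flag is that the curvature generators $B^a \leftrightarrow \phi^a$ genuinely cannot be omitted: the ghost-only complex computes Lie algebra cohomology $H^\bullet_{\mathrm{CE}}(\mathfrak{g};\Omega^\bullet(\mathcal{F}))$, whose point-value is $H^\bullet(G)$ rather than $H^\bullet(BG)$, and it is precisely the $\phi$-sector (the Weil differential $d_W\phi^a = -f^a_{bc}\theta^b\phi^c$) that upgrades the answer to Borel equivariant cohomology. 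Verifying that the quartet pairs are genuinely contractible in this setting, so that physical BRST cohomology is gauge-independent, is a short spectral-sequence argument that I would include for completeness.
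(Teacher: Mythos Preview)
Your route is genuinely different from the paper's. The paper gives a one-line identification of the BRST operator \eqref{eq:brst-operator} directly with the Cartan differential $d_G$: it matches $\delta\leftrightarrow \diff$, reads $c^{a}\lie{X_a}$ as (up to convention) the contraction term $-\cont{X_c}$, and notes that the Chevalley--Eilenberg piece matches. You instead realize the BRST data as the Weil model $W(\mathfrak g)\otimes\Omega^\bullet(\mathcal F)$ via the dictionary of Table~\ref{tab:WeilBRST}, isolate the basic subcomplex as the physical sector, and then transport through the Kalkman map $\kappa$ of \S\ref{subsec:kalkman}. Your approach buys a cleaner treatment of the auxiliary/curvature sector $B^a\leftrightarrow\phi^a$ (you correctly observe that omitting it lands in $H^\bullet_{\mathrm{CE}}$ rather than Borel cohomology), an explicit argument for why antighost quartets decouple, and honest flagging of the infinite-dimensional issues; the paper's approach buys brevity and avoids the detour through $\kappa$ entirely.

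One wrinkle worth tightening: you assert that the completed $s$ equals $d_W\otimes 1 + 1\otimes\diff$ on $W(\mathfrak g)\otimes\Omega^\bullet(\mathcal F)$, but the operator in \eqref{eq:brst-operator} carries the explicit mixing term $c^{a}\lie{X_a}$, which the bare Weil differential does not. What you have written down is really Kalkman's intermediate ``BRST model'' differential, not the pure Weil one. The discrepancy is harmless for your conclusion---on the basic subcomplex the invariance condition kills $\lie{X_a}$, so the two differentials agree there and the cohomologies coincide---but the identification of operators on the full tensor product is not literal, and you should say so (or, equivalently, note that conjugating by $\kappa$ is precisely what converts $c^a\lie{X_a}$ into the Cartan contraction, which is the paper's shortcut).
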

\begin{proof}
Identify $\delta$ with $\diff$, $c^{a}\lie{X_{a}}$ with $-\cont{X_{c}}$ (up to conventions), and note the Chevalley--Eilenberg term matches. This gives $s \simeq d_G$ on the appropriate complexes.
\end{proof}

\begin{remark}
This identification explains why topological field theories (Chern--Simons, Donaldson--Witten) often have BRST-exact actions: they integrate equivariantly exact forms, whose evaluation localizes via ABBV.
\end{remark}

\section{Witten Deformation and its Equivariant Counterpart}\label{sec:witten}
\subsection{Witten Deformation of de Rham Complex}\label{subsec:witten-def}
Given $f: M \to \R$ smooth, Witten introduced a deformation
\begin{equation}\label{eq:witten-differential}
    d_{t} \coloneqq e^{-tf} \, \diff \, e^{tf} = \diff + t \, \diff f \wedge.
\end{equation}

\begin{theorem}[Properties of $d_t$]
\begin{enumerate}
    \item $d_t^2 = 0$ for all $t \ge 0$
    \item $(\Omega^\bullet(M), d_t)$ has cohomology isomorphic to $H^\bullet(M; \R)$
    \item For $f$ Morse, as $t \to \infty$, the Witten Laplacian $\Delta_t = d_t d_t^\dagger + d_t^\dagger d_t$ localizes eigenforms near critical points
\end{enumerate}
\end{theorem}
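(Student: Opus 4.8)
The plan is to dispatch the three assertions in order of increasing depth, the first two being formal consequences of the conjugation identity $d_t = e^{-tf}\,\diff\,e^{tf}$ and the third carrying the genuine analytic content. For (1), I would observe that $d_t$ is the conjugate of $\diff$ by the invertible operator ``multiplication by $e^{tf}$'', so $d_t^{\,2} = e^{-tf}\,\diff^{\,2}\,e^{tf} = 0$; equivalently, from $d_t = \diff + t\,\diff f\wedge$ one checks directly that $\{\diff,\,\diff f\wedge\} = 0$ (using $\diff^2 f = 0$) and $(\diff f\wedge)^2 = 0$. For (2), the same identity says that multiplication by $e^{tf}$ is a degree-preserving cochain isomorphism $(\Omega^\bullet(M),d_t)\xrightarrow{\ \sim\ }(\Omega^\bullet(M),\diff)$, since $\diff\circ e^{tf} = e^{tf}\circ d_t$ and $e^{tf}$ is a linear bijection of $\Omega^\bullet(M)$; it therefore induces an isomorphism on cohomology, and the de Rham theorem yields $H^\bullet(\Omega^\bullet(M),d_t)\cong H^\bullet_{\mathrm{dR}}(M)\cong H^\bullet(M;\R)$.

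For (3), fix a Riemannian metric on $M$, so that $d_t^\dagger = (\diff + t\,\diff f\wedge)^\dagger = \diff^\dagger + t\,\cont{\nabla f}$. Expanding $\Delta_t = d_t d_t^\dagger + d_t^\dagger d_t$ and collecting powers of $t$ gives the Witten decomposition
\[
\Delta_t \;=\; \Delta_0 \;+\; t\,\mathcal{H}_f \;+\; t^2\,\lvert\nabla f\rvert^2,
\]
where $\Delta_0$ is the Hodge Laplacian, the potential $\lvert\nabla f\rvert^2$ acts by scalar multiplication, and $\mathcal{H}_f = \mathcal{L}_{\nabla f} + (\diff f\wedge)\,\diff^\dagger + \diff^\dagger\,(\diff f\wedge)$ is a \emph{zeroth-order} endomorphism of $\Lambda^\bullet T^*M$: the Bochner-type cancellations among its first-order constituents leave only the covariant Hessian $\nabla\diff f$ acting through the Clifford commutators $[\cont{e_i},\,\diff x^j\wedge]$. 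Since $f$ is Morse, $\lvert\nabla f\rvert^2$ vanishes exactly on the finite set $\mathrm{Crit}(f)$ and is bounded below by a positive constant off any fixed neighborhood of it, so the quadratic potential $t^2\lvert\nabla f\rvert^2$ confines low-energy states to $\mathrm{Crit}(f)$ as $t\to\infty$. Near a critical point $p$, in Morse coordinates with $f = f(p) + \tfrac12\sum_i\epsilon_i x_i^2$, $\epsilon_i=\pm1$, the parabolic rescaling $x\mapsto x/\sqrt t$ exhibits $t^{-1}\Delta_t$, up to an $O(t^{-1/2})$ remainder, as a fixed direct sum of harmonic oscillators acting on the Clifford module $\Lambda^\bullet\R^n$; this model operator has discrete nonnegative spectrum with a unique zero mode---a Gaussian-localized form of form-degree equal to the Morse index $\lambda(p)$---and a strictly positive gap above it, so that $\Delta_t$ has one eigenvalue near $0$ per critical point and the next $\gtrsim t$.

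To make the localization rigorous I would invoke an IMS-type localization formula: choose a partition of unity $1 = \sum_j \chi_j^2$ on $M$ with exactly one $\chi_j$ supported in a small ball about each critical point and one supported where $\lvert\nabla f\rvert\ge c$, so that $\Delta_t = \sum_j \chi_j\,\Delta_t\,\chi_j - \sum_j \lvert\nabla\chi_j\rvert^2$ with a bounded, $t$-independent error. On the bulk piece the potential bound gives $\Delta_t \ge c^2 t^2 - Ct - C'$, so its contribution to the low spectrum is empty for large $t$; on each critical ball one compares with the harmonic-oscillator model, using Agmon (Combes--Thomas) exponential-decay estimates to control the boundary corrections, and a min--max argument then shows that, on $k$-forms, $\Delta_t$ has exactly $m_k \coloneqq \#\{\,p\in\mathrm{Crit}(f): \lambda(p)=k\,\}$ eigenvalues that are $o(t)$---in fact $O(e^{-ct})$---while the remainder of $\mathrm{spec}(\Delta_t)$ is $\ge c't$, and that the corresponding eigenforms concentrate in arbitrarily small neighborhoods of $\mathrm{Crit}(f)$ with Gaussian profiles as $t\to\infty$. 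The main obstacle is precisely this last step: establishing the \emph{uniform spectral gap} between the finite-dimensional space of ``small'' eigenforms (of total dimension $\#\mathrm{Crit}(f)$) and the rest of $\mathrm{spec}(\Delta_t)$, together with the accompanying exponential localization of those eigenforms---everything preceding it being either algebraic or a soft functional-analytic reduction. As a by-product, the degree-graded count of small eigenvalues yields the Morse inequalities, the resulting finite-dimensional complex is the Morse--Smale--Witten complex, and the equivariant refinement of this localization mechanism is what drives the analytic proof of the ABBV formula in \S\ref{sec:abbv-proof}.
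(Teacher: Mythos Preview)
Your proposal is correct and, for parts (1) and (2), proceeds exactly as the paper does: nilpotency via either the conjugation identity or the direct expansion $\diff^2=0$, $(\diff f\wedge)^2=0$, $[\diff,\diff f\wedge]=0$, and the cohomology isomorphism via the chain isomorphism $e^{tf}$. For part (3) the paper offers only the single sentence ``Follows from semiclassical analysis of $\Delta_t$,'' whereas you supply a substantive outline of that analysis---the Witten splitting $\Delta_t=\Delta_0+t\,\mathcal{H}_f+t^2|\nabla f|^2$ with $\mathcal{H}_f$ zeroth-order, IMS localization, the harmonic-oscillator model near critical points, Agmon decay, and the min--max spectral gap. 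This is not a different route but a fleshing-out of what the paper leaves implicit; your version is considerably more informative, and the identification of the uniform spectral gap as the genuine analytic obstacle is apt.
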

\begin{proof}
(1) $\diff^2 = 0$, $\diff f \wedge \diff f = 0$, and $[\diff, \diff f \wedge] = \diff^2 f \wedge = 0$. \\
(2) Conjugation by $e^{tf}$ gives chain isomorphism. \\
(3) Follows from semiclassical analysis of $\Delta_t$.
\end{proof}

\subsection{Gauge-Fixing Interpretation}\label{subsec:gauge-fixing}
In BV formalism, gauge-fixing is implemented via a canonical transformation generated by a gauge-fixing fermion $\Psi$
\begin{equation}\label{eq:gauge-fixing-transformation}
    s_{\text{BV}} \;\longrightarrow\; e^{\{\Psi,\;\cdot\}} s_{\text{BV}} e^{-\{\Psi,\;\cdot\}}.
\end{equation}
For supersymmetric quantum mechanics with $Q$ the supersymmetry charge, choosing $\Psi = t f$ yields $Q_t = e^{tf} Q e^{-tf}$, recovering Witten's deformation.

\subsection{Equivariant Witten Deformation}\label{sec:equiv-witten}
Assume $f$ is $G$-invariant: $\lie{X_\xi}f = 0$ for all $\xi \in \mathfrak{g}$. Define
\begin{definition}[Equivariant Witten Deformation]
The \emph{Kalkman--Witten transformation} is
\begin{equation}\label{eq:kalkman-witten}
    \kappa_{t} \coloneqq e^{-tf} \, \kappa \, e^{tf} = e^{-tf - \cont{\theta}},
\end{equation}
where the simplification uses $[f, \cont{\theta}] = 0$ due to $G$-invariance.
\end{definition}

Applying $\kappa_t$ to the Weil differential gives
\begin{align}
    d_{W,t} &\coloneqq \kappa_{t} \, d_{W} \, \kappa_{t}^{-1} \nonumber \\
            &= e^{-tf} d_{G} e^{tf} \label{eq:equiv-witten-diff} \\
            &= d_{G} + t \, \diff f \wedge . \nonumber
\end{align}
We denote $d_{G,t} \coloneqq d_G + t\, \diff f \wedge$.

\begin{theorem}[Properties of $d_{G,t}$]
\begin{enumerate}
    \item $d_{G,t}^2 = 0$ on $\Omega_G^\bullet(M)$
    \item $(\Omega_G^\bullet(M), d_{G,t})$ has cohomology isomorphic to $H_G^\bullet(M)$
    \item The map $\alpha \mapsto e^{-tf}\alpha$ gives an isomorphism $(\Omega_G^\bullet(M), d_G) \cong (\Omega_G^\bullet(M), d_{G,t})$
\end{enumerate}
\end{theorem}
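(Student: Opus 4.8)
The plan is to obtain all three statements simultaneously from a single observation: multiplication by $e^{tf}$ is a degree-preserving automorphism of the Cartan complex that conjugates $d_G$ into $d_{G,t}$. First I would set up the map carefully. Let $M_t\colon \Omega^\bullet_G(M)\to\Omega^\bullet_G(M)$ be the operator acting on form values by $(M_t\alpha)(\xi)=e^{-tf}\,\alpha(\xi)$, extended $S(\mathfrak g^*)$-linearly. One checks it is well defined: each power $f^k$ is a $0$-form, so $M_t$ leaves the de Rham degree and the polynomial degree in $\xi$ untouched, hence preserves the total grading; and since $f$ is $G$-invariant so is $e^{\pm tf}$, whence $g^\ast(e^{-tf}\alpha(\xi))=e^{-tf}g^\ast\alpha(\xi)$ and $M_t$ carries equivariant polynomial maps to equivariant polynomial maps, with two-sided inverse $M_t^{-1}\colon\alpha\mapsto e^{tf}\alpha$. (The only point worth flagging is that $e^{tf}$ is not polynomial in $\xi$, but this is harmless: it multiplies only the $\Omega^\bullet(M)$-factor and respects the $S(\mathfrak g^*)$-module structure.)

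Next I would verify the conjugation identity $M_t\circ d_G\circ M_t^{-1}=d_{G,t}$ by a one-line Leibniz computation: for $\alpha\in\Omega^\bullet_G(M)$ and $\xi\in\mathfrak g$,
\begin{align*}
\bigl(M_t\,d_G\,M_t^{-1}\alpha\bigr)(\xi)
&= e^{-tf}\Bigl[\diff\bigl(e^{tf}\alpha(\xi)\bigr)-\cont{X_\xi}\bigl(e^{tf}\alpha(\xi)\bigr)\Bigr]\\
&= \diff\alpha(\xi)-\cont{X_\xi}\alpha(\xi)+t\,\diff f\wedge\alpha(\xi)
= (d_{G,t}\alpha)(\xi),
\end{align*}
using $\diff(e^{tf}\beta)=t\,e^{tf}\diff f\wedge\beta+e^{tf}\diff\beta$ and the $C^\infty(M)$-linearity of $\cont{X_\xi}$. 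This already gives (3): $M_t$ intertwines $d_G$ and $d_{G,t}$, so $\alpha\mapsto e^{-tf}\alpha$ is an isomorphism of cochain complexes $(\Omega^\bullet_G(M),d_G)\xrightarrow{\ \sim\ }(\Omega^\bullet_G(M),d_{G,t})$.

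Statements (1) and (2) then follow formally: $d_{G,t}^2=\bigl(M_t\,d_G\,M_t^{-1}\bigr)^2=M_t\,d_G^2\,M_t^{-1}=0$ by the nilpotency of $d_G$ established earlier, and an isomorphism of cochain complexes induces an isomorphism on cohomology, so $H^\bullet\bigl(\Omega^\bullet_G(M),d_{G,t}\bigr)\cong H^\bullet\bigl(\Omega^\bullet_G(M),d_G\bigr)=H^\bullet_G(M)$. If one prefers a proof of (1) not routed through the conjugation, I would instead expand $d_{G,t}^2=d_G^2+t\bigl(d_G\circ(\diff f\wedge)+(\diff f\wedge)\circ d_G\bigr)+t^2(\diff f\wedge)^2$; the outer terms vanish because $d_G^2=0$ and $\diff f\wedge\diff f=0$, while the anticommutator collapses—using $\diff(\diff f)=0$ and, crucially, $\cont{X_\xi}\diff f=X_\xi f=\lie{X_\xi}f=0$ by $G$-invariance of $f$—so that $\diff f\wedge$ anticommutes with $d_G$.

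I do not anticipate a genuine obstacle. The computations are routine; the content is entirely bookkeeping about which structures $e^{tf}$ respects, and the sole place the hypothesis $\lie{X_\xi}f=0$ enters is the cancellation of $\cont{X_\xi}\diff f$, which is precisely what makes $\diff f\wedge$ compatible with the contraction part of $d_G$. The mildest care is required only in checking that $M_t$ stays within the $G$-invariant, $S(\mathfrak g^*)$-polynomial subcomplex, as noted above.
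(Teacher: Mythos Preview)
Your proposal is correct and follows essentially the same approach as the paper: the conjugation identity $d_{G,t}=e^{-tf}\,d_G\,e^{tf}$ (which the paper records just above the theorem) yields (3) and hence (1)--(2), and your alternative direct expansion for (1) matches the paper's displayed computation of $d_{G,t}^2$. If anything, your write-up is more complete---the paper's proof block only spells out part (1)---and your careful check that $M_t$ preserves $G$-invariance and the $S(\mathfrak g^*)$-grading, together with the explicit use of $\cont{X_\xi}\diff f=\lie{X_\xi}f=0$, fills in exactly the points the paper leaves implicit.
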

\begin{proof}
(1) Compute:
\begin{equation}
d_{G,t}^2 = d_G^2 + t[d_G, \diff f \wedge] + t^2(\diff f \wedge)^2.
\end{equation}
Since $d_G^2 = 0$, $(\diff f \wedge)^2 = 0$, and $[d_G, \diff f \wedge] = \diff(d_G f) \wedge = 0$ (as $d_G f = \diff f - \cont{X}f = 0$ by invariance).
\end{proof}

\section{Proof of the ABBV Localization Formula via Equivariant Witten Deformation}\label{sec:abbv-proof}
We present an analytic proof of the ABBV localization formula using the equivariant Witten deformation.

\subsection{Setting and Assumptions}\label{subsec:setting}
Let $M$ be compact, oriented, $n$-dimensional without boundary, with $G = \sone$ action. Assume
\begin{enumerate}
    \item Fixed point set $M^G = \{p_1, \dots, p_N\}$ consists of isolated points
    \item $\omega \in \Omega^{n}_{G}(M)$ is equivariantly closed: $d_G \omega = 0$
    \item $f: M \to \R$ is $G$-invariant Morse with critical points exactly at $M^G$
    \item $f \geq 0$, $f(p_i) = 0$, $f > 0$ away from fixed points
\end{enumerate}

\begin{lemma}[Existence of Invariant Morse Function]
For a compact Lie group $G$ acting on compact $M$ with isolated fixed points, there exists a $G$-invariant Morse function $f$ with critical points at $M^G$.
\end{lemma}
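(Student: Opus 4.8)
The plan is to combine an averaging construction, which supplies $G$-invariant data, with a genericity argument, which supplies the Morse condition. Two elementary observations set the stage. First, averaging an arbitrary Riemannian metric against the normalized Haar measure of $G$ produces a $G$-invariant metric, so fix one. Second, if $f \in C^{\infty}(M)^{G}$ and $p \in M^{G}$, then $\diff f_{p}$ is a $G$-fixed element of $T_{p}^{*}M$; because $p$ is an \emph{isolated} fixed point, the isotropy representation of $G$ on $T_{p}M$ has no nonzero fixed vector (via the $G$-equivariant exponential map such a vector would generate a curve of fixed points), hence $(T_{p}^{*}M)^{G} \cong \bigl((T_{p}M)^{G}\bigr)^{*} = 0$ and $\diff f_{p} = 0$. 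Thus every $G$-invariant smooth function already has all of $M^{G}$ among its critical points, and it remains to arrange (i) nondegeneracy of these critical points and (ii) the absence of any others.

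For a concrete candidate I would invoke the equivariant Mostow--Palais embedding theorem: there is a $G$-equivariant embedding $\iota \colon M \hookrightarrow V$ into a finite-dimensional orthogonal $G$-representation $(V,\langle\cdot,\cdot\rangle)$. For $v_{0}$ in the fixed subspace $V^{G}$ set
\begin{equation}\label{eq:inv-distance-morse}
    f_{v_{0}}(x) \coloneqq \tfrac{1}{2}\bigl\lVert \iota(x) - v_{0} \bigr\rVert^{2},
\end{equation}
a $G$-invariant smooth function whose critical points on $M$ are the $x$ with $v_{0}-\iota(x) \perp \iota_{*}(T_{x}M)$. At $p \in M^{G}$ one has $\iota(p),v_{0} \in V^{G}$ while $\iota_{*}(T_{p}M) \subseteq (V^{G})^{\perp}$, since its $G$-fixed part $\iota_{*}\bigl((T_{p}M)^{G}\bigr)$ vanishes; so $p$ is critical, and its Hessian is the metric on $T_{p}M$ corrected by the term $\langle\,\iota(p)-v_{0},\,\mathrm{II}(\cdot,\cdot)\,\rangle$ from the second fundamental form. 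This Hessian is nondegenerate unless $\iota(p)-v_{0}$ is a focal direction, i.e.\ a zero of the polynomial $w \mapsto \det(\mathrm{Id}+S_{w})$ (with $S_{w}$ the shape operator of $\iota(M)$ at $p$) on $V^{G} \subseteq \nu_{p}M$; this polynomial does not vanish identically on $V^{G}$, as it equals $1$ at $w=0$, so the bad $v_{0}$ form a measure-zero subset of $V^{G}$. Since $M^{G}$ is finite, a generic $v_{0} \in V^{G}$ makes the Hessian of $f_{v_{0}}$ nondegenerate at every fixed point.

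For (ii), a $G$-invariant function has $G$-invariant critical set, hence a union of orbits; if the function is Morse then each critical point is isolated, which forces each critical orbit to be a single point and therefore to lie in $M^{G}$. (For $G=\sone$ this is automatic, as every non-fixed orbit is a circle; for general compact $G$ with isolated fixed points one uses that non-fixed orbits are positive-dimensional.) That a generic $v_{0} \in V^{G}$ also produces no degenerate non-fixed critical orbit follows from a parametric-transversality (Sard) argument for the normal exponential map of $\iota(M)$, with the parameter restricted to $V^{G}$. Combining, $\operatorname{Crit}(f) = M^{G}$ exactly for $f \coloneqq f_{v_{0}}$. The sign normalization of \S\ref{subsec:setting} is then only a local bookkeeping convenience: by the Morse lemma $f$ takes the form $\sum_{j}\pm u_{j}^{2}$ near each $p_{i}$, and since only the behaviour of $e^{-tf}$ on these neighbourhoods enters the localization estimates, one may postcompose with an increasing diffeomorphism of $\R$ to normalize $f$ chart-by-chart.

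The main obstacle is steps (i)--(ii): securing the Morse property \emph{inside} the invariant category, where admissible perturbations must themselves be $G$-invariant, so the usual density of Morse functions does not apply directly. The real content is that the classical focal-point and Sard estimates survive restriction of the auxiliary parameter $v_{0}$ to the fixed subspace $V^{G}$---cleanly at the fixed points through the polynomial $\det(\mathrm{Id}+S_{w})$, and with more care on the non-fixed orbits. An embedding-free alternative is to patch: identify an invariant neighbourhood of each $p_{i}$ with a ball in $(T_{p_{i}}M,\text{linear action})$ via the slice theorem, place the invariant Morse function $u \mapsto \lVert u \rVert^{2}$ there, and glue---through a $G$-invariant partition of unity---to an invariant function with no critical points on the complement. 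Either route reproves the stated existence, which may also simply be quoted from the theory of invariant (Wasserman) Morse functions.
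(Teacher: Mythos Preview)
Your sketch is sound and takes a genuinely different route from the paper. The paper's proof is a single construction: choose a generic $\xi\in\mathfrak{g}$ with $X_\xi$ vanishing exactly on $M^G$, fix a $G$-invariant metric $g$, and set $f=\tfrac12\,g(X_\xi,X_\xi)$. This $f$ is $G$-invariant (at least in the abelian case relevant to \S\ref{sec:abbv-proof}), nonnegative, vanishes precisely on $M^G$, and is Morse there because the linearized isotropy action at an isolated fixed point is invertible---so the sign and vanishing normalizations of \S\ref{subsec:setting} come for free. Your Mostow--Palais embedding plus generic squared-distance argument is heavier machinery but aims at the lemma in full generality, and it correctly isolates the genuine technical point (that the Sard/focal-set analysis must survive restriction of the parameter to the fixed subspace $V^G$; this is precisely the content of Wasserman's invariant Morse theory you cite). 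Two caveats are worth recording. First, the paper's $f$ can acquire extra critical orbits away from $M^G$ (e.g.\ the equator for the standard $\sone$-rotation of $S^2$, where $\lvert X\rvert^2$ has a whole circle of maxima), so strictly it delivers only the properties actually used downstream rather than the lemma as stated; your route, once the equivariant Sard step is filled in, really does give $\operatorname{Crit}(f)=M^G$. Second, your closing remark about ``postcomposing with an increasing diffeomorphism of $\R$'' cannot turn a saddle into a local minimum and so does not recover the positivity condition (4) of \S\ref{subsec:setting}; for that one should either pass to $\lvert\nabla f\rvert^2$ in a $G$-invariant metric or simply adopt the paper's vector-field construction.
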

\begin{proof}
Take generic $\xi \in \mathfrak{g}$ such that $X_\xi$ has isolated zeros. Define $f(x) = \frac{1}{2} g(X_\xi, X_\xi)$ for a $G$-invariant metric $g$. This is $G$-invariant, and its critical points are zeros of $X_\xi$.
\end{proof}

\subsection{Deformation and Independence of $t$}\label{subsec:independence}
Consider the deformed form
\begin{equation}\label{eq:deformed-form}
    \omega_{t} \coloneqq e^{-tf} \, \omega .
\end{equation}
Since $d_G \omega = 0$, we have
\begin{equation}
    d_{G,t} \, \omega_t = e^{-tf} d_G \omega = 0.
\end{equation}
Define the integral as a function of the deformation parameter
\begin{equation}\label{eq:integral-t}
    I(t) \coloneqq \int_{M} [\omega_t]_n = \int_{M} e^{-tf} [\omega]_n.
\end{equation}

\begin{proposition}[$t$-independence]
$I(t)$ is independent of $t$: $I(t) = I(0)$ for all $t \ge 0$.
\end{proposition}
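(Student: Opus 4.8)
The plan is to show that $I(t)$ has identically vanishing derivative. Since $M$ is compact and $e^{-tf}[\omega]_n$ depends smoothly on $(x,t)$, differentiation under the integral sign is legitimate and gives
\[
  I'(t) \;=\; \int_M \frac{\partial}{\partial t}\bigl(e^{-tf}[\omega]_n\bigr) \;=\; -\int_M f\,e^{-tf}\,[\omega]_n \;=\; -\int_M \bigl[f\,\omega_t\bigr]_n .
\]
The task then reduces to exhibiting $f\,e^{-tf}[\omega]_n$ as an \emph{exact} $n$-form on $M$; once this is done, Stokes' theorem on the closed manifold $M$ forces $I'(t)=0$, hence $I(t)\equiv I(0)$ for all $t\ge 0$.

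To build the primitive I would combine two structural inputs. First, equivariant closedness: writing $\omega_t=\sum_{j\ge 0}[\omega_t]_{n-2j}\,\phi^{j}$ (for $G=\sone$ only even drops in form degree occur), the relation $d_{G,t}\omega_t=0$ unpacks into the descent equations $\cont{X_\xi}[\omega_t]_{n-2j}=(\diff+t\,\diff f\wedge)[\omega_t]_{n-2j-2}$ for $j\ge 0$, with $\diff[\omega_t]_n=0$ at the top. Second, the geometry of $f$: fixing a $G$-invariant metric $g$ and setting $\theta:=g(X_\xi,\cdot)\in\Omega^1(M)^G$, one has $\cont{X_\xi}\theta=|X_\xi|^2=2f$, and from $\lie{X_\xi}\theta=0$ also $\cont{X_\xi}\diff\theta=-2\,\diff f$. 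Since $\theta\wedge[\omega_t]_n$ has form degree $n+1$ and therefore vanishes, contracting it with $X_\xi$ yields $2f\,[\omega_t]_n=\theta\wedge\cont{X_\xi}[\omega_t]_n=\theta\wedge(\diff+t\,\diff f\wedge)[\omega_t]_{n-2}$; expanding $\theta\wedge\diff[\omega_t]_{n-2}=\diff\bigl(\theta\wedge[\omega_t]_{n-2}\bigr)-\diff\theta\wedge[\omega_t]_{n-2}$ peels off an exact piece, while the remaining $\diff\theta\wedge[\omega_t]_{n-2}$ (together with the $\diff f\wedge\theta$ term, rewritten via $\diff f=-\tfrac12\cont{X_\xi}\diff\theta$) is fed back into the next descent relation, and so on. The tower necessarily terminates, since $\theta\wedge(\diff\theta)^{k}$ eventually exceeds form degree $n$, and it assembles into a globally defined $(n-1)$-form $\gamma_t$ — built from $\theta$, the powers $(\diff\theta)^{k}$, the components $[\omega_t]_{n-2},[\omega_t]_{n-4},\dots$, and smooth functions of $f$ — satisfying $f\,e^{-tf}[\omega]_n=\diff\gamma_t$.

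The step I expect to be the main obstacle is verifying that $\gamma_t$ is a \emph{globally smooth} $(n-1)$-form and that the recursion truly closes to an honest exact form rather than to an exact form plus an error concentrated near $M^G$. The building block $\theta$ has $\cont{X_\xi}\theta=2f$ vanishing on the fixed-point set, so one must check that the peeling-off never requires dividing by $\cont{X_\xi}\theta$ and that the sign and parity bookkeeping (the tower stopping at $[\omega_t]_0$ or $[\omega_t]_1$) is consistent; near each $p_i$ this is cleanest in local coordinates adapted to the linearized $G$-action. An essentially equivalent but more conceptual route is to observe that $\omega\wedge e^{s\,d_G\theta}$ is $d_G$-closed (using $d_G\omega=0$ and $d_G^2\theta=0$), that its $s$-derivative is $(-1)^{n}d_G\bigl(\omega\wedge\theta\wedge e^{s\,d_G\theta}\bigr)$, and that $\int_M[d_G(\cdot)]_n=\int_M\diff[\,\cdot\,]_{n-1}=0$ directly by Stokes; since $d_G\theta=\diff\theta-2\phi f$ this deformation carries the factor $e^{-2s\phi f}$, and extracting the appropriate component recovers $I(t)$ as an $s$-independent quantity. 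Pinning down this last identification — in particular bookkeeping the auxiliary variable $\phi$ — is where I would concentrate the effort.
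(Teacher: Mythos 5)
Your instinct about where the difficulty lies is exactly right, and unfortunately it is not a technicality: the descent recursion cannot be closed into a globally smooth primitive, and in fact the statement you are trying to prove is false as it stands. Push your own recursion one step further: from $\cont{X_\xi}\bigl(\theta\wedge[\omega_t]_n\bigr)=0$ you get $2f\,[\omega_t]_n=\diff\theta\wedge[\omega_t]_{n-2}-\diff\bigl(\theta\wedge[\omega_t]_{n-2}\bigr)+\cdots$, and to peel off the non-exact remainder $\diff\theta\wedge[\omega_t]_{n-2}$ you must contract $\theta\wedge\diff\theta\wedge[\omega_t]_{n-2}=0$ with $X_\xi$, which yields $2f\,\diff\theta\wedge[\omega_t]_{n-2}=\theta\wedge\cont{X_\xi}\bigl(\diff\theta\wedge[\omega_t]_{n-2}\bigr)$ --- at this point dividing by $2f=|X_\xi|^2$ is unavoidable (it is the inversion of $\cont{X_\xi}\theta$ in the Berline--Vergne kernel $\theta\wedge(\cont{X_\xi}\theta-\diff\theta)^{-1}$). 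So $\gamma_t$ exists only on $M\setminus M^G$, and Stokes applied to $M$ minus small balls around the $p_i$ leaves boundary contributions that do not vanish. Concretely, $I'(t)=-\int_M f\,e^{-tf}[\omega]_n$ is generically nonzero: on $\CP^1$ take the equivariantly closed $\omega=\omega_{FS}+\phi H$ (so $[\omega]_2=\omega_{FS}$, a positive area form with total mass $1$) and $f=\tfrac12|X|^2\ge 0$ as in the paper's own lemma; then $I(t)=\int_{S^2}e^{-tf}\omega_{FS}$ is strictly decreasing and tends to $0$, whereas $I(0)=1$. No proof can succeed. (The paper's own argument glosses over the same point: the functional $\int_M[\,\cdot\,]_n$ does not descend to $d_{G,t}$-cohomology, because $[d_{G,t}\alpha]_n=e^{-tf}\,\diff\bigl(e^{tf}[\alpha]_{n-1}\bigr)$ need not integrate to zero.)

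Your second, ``more conceptual'' route is the correct one --- it is the Berline--Vergne argument --- but the quantity it shows to be $s$-independent is $J(s)=\int_M\bigl[\omega\wedge e^{s\,d_G\theta}\bigr]_n$, not $I(t)$. Since $\omega\wedge e^{s\,d_G\theta}-\omega=\pm\,d_G\bigl(\omega\wedge\theta\wedge\sum_{k\ge 1}\tfrac{s^k}{k!}(d_G\theta)^{k-1}\bigr)$ is honestly $d_G$-exact, and $\int_M[\,\cdot\,]_n$ does descend to $d_G$-cohomology, the $s$-independence of $J$ is immediate. However, with $d_G\theta=\diff\theta-\cont{X_\xi}\theta=\diff\theta-2f$ (suppressing the $\phi$ bookkeeping) the top component of $\omega\wedge e^{s\,d_G\theta}$ is $e^{-2sf}\sum_{k\ge 0}\tfrac{s^k}{k!}[\omega]_{n-2k}\wedge(\diff\theta)^k$; the $k\ge 1$ terms cannot be discarded, and they are precisely what survives the $s\to\infty$ Gaussian limit to produce the fixed-point contributions $\omega_0(p)/\prod_j w_j(p)$. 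The identification of $J(s)$ with $I(2s)$ that you hoped to ``pin down'' is exactly where the two quantities part ways, and the discrepancy between them is the entire content of the localization formula. The fix is therefore not to repair the primitive $\gamma_t$ but to replace the scalar deformation $e^{-tf}\omega$ by the equivariant deformation $\omega\wedge e^{s\,d_G\theta}$ from the outset.
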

\begin{proof}
The map
\begin{equation}
\Phi_t: \alpha \mapsto e^{-tf} \alpha,
\end{equation}
is an isomorphism of complexes from $(\Omega_G^\bullet(M), d_G)$ to $(\Omega_G^\bullet(M), d_{G,t})$, because $d_{G,t} = e^{-tf} d_G e^{tf}$. Specifically, $\Phi_t$ is a chain map because
\begin{equation}
d_{G,t} (\Phi_t \alpha) = (d_G + t \, df \wedge) (e^{-tf} \alpha) = e^{-tf} d_G \alpha - t (df \wedge e^{-tf} \alpha) + t (df \wedge e^{-tf} \alpha) = e^{-tf} d_G \alpha = \Phi_t (d_G \alpha),
\end{equation}
where we used that $d_G (e^{-tf} \alpha) = e^{-tf} d_G \alpha + (d_G e^{-tf}) \alpha and d_G e^{-tf} = -t df e^{-tf}$. The map is invertible, with inverse $\Phi_t^{-1} \alpha = e^{tf} \alpha$.

Therefore, it induces an isomorphism on equivariant cohomology. In particular, the cohomology class of $e^{-tf} \omega$ in the $d_{G,t}$-cohomology is the image of the fixed class $[\omega]_{d_G}$. The integral $I(t) = \int_M [e^{-tf} \omega]_n$ depends only on this cohomology class, as the integral of the top-degree component of an equivariantly closed form is a well-defined functional on equivariant cohomology (since equivariantly exact forms integrate to zero over compact $M$ without boundary). Hence, $I(t)$ is constant in $t$, $I(t) \equiv I(0) = \int_M [\omega]_n$.
\end{proof}

\begin{remark}
The invariance of the integral under the deformation is a consequence of the cohomological nature of the pairing with the fundamental class in the equivariant setting.
\end{remark}

\subsection{Localization as $t \to \infty$}\label{subsec:localization-limit}
By $t$-independence:
\begin{equation}\label{eq:localization-limit}
    \int_{M} [\omega]_n = \lim_{t\to\infty} I(t) = \lim_{t\to\infty} \int_{M} e^{-tf} [\omega]_n.
\end{equation}

\begin{lemma}[Exponential decay away from fixed points]
For any compact $K \subset M \setminus M^G$, $\int_K e^{-tf} [\omega]_n \to 0$ exponentially as $t \to \infty$.
\end{lemma}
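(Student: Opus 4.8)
The plan is to exploit the fact that $f$ is strictly positive on the compact set $K \subset M \setminus M^G$, so that $e^{-tf}$ decays uniformly and exponentially there. First I would set $m \coloneqq \min_{x \in K} f(x)$; since $K$ is compact, $f$ is continuous, and $K$ avoids all critical points (which by assumption coincide with $M^G$, where $f = 0$ and $f > 0$ elsewhere), we have $m > 0$. Then on $K$ the pointwise bound $e^{-tf(x)} \leq e^{-tm}$ holds for all $t \geq 0$.

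Next I would control the form part. Write $[\omega]_n = h \, \mathrm{vol}$ for the top-degree component, where $\mathrm{vol}$ is the Riemannian volume form of a fixed $G$-invariant metric and $h \in C^\infty(M)$; note that $[\omega]_n$ means the degree-$n$ (top) differential-form component of $\omega(\xi)$ — since we are integrating over $M$, only this piece contributes. On the compact set $K$ the function $|h|$ attains a finite maximum $C \coloneqq \max_{x \in K} |h(x)|$. Therefore
\begin{equation}
\left| \int_K e^{-tf} [\omega]_n \right| \leq \int_K e^{-tf(x)} |h(x)| \, \mathrm{vol} \leq e^{-tm} \, C \, \mathrm{vol}(K),
\end{equation}
and since $\mathrm{vol}(K) < \infty$ by compactness, the right-hand side tends to $0$ as $t \to \infty$ at the exponential rate $e^{-tm}$. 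This gives the claimed exponential decay.

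One subtlety worth addressing explicitly: an equivariant form $\omega \in \Omega^n_G(M)$ is a polynomial map $\xi \mapsto \omega(\xi)$, so strictly speaking the integrand $e^{-tf}[\omega]_n$ also depends on $\xi$ through the polynomial coefficients; however, for a $G = \sone$ action one evaluates at a fixed generator $\xi$ (equivalently, $\phi$ is treated as a formal parameter and the bound holds coefficient-by-coefficient in $\phi$), so the estimate above applies to each coefficient function separately with the same rate $e^{-tm}$. I would state this briefly to keep the argument clean.

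The main obstacle is essentially bookkeeping rather than analysis: making precise what ``$[\omega]_n$'' and the integral mean when $\omega$ is polynomial in $\xi$, and ensuring the constants $m$, $C$ are genuinely positive and finite — all of which follow immediately from compactness of $K$ and the hypothesis that $f$ vanishes precisely on $M^G$. No hard estimates are needed here; the exponential decay is a direct consequence of uniform positivity of $f$ on $K$. The real analytic work (the contribution of shrinking neighborhoods of the fixed points and the Gaussian integral that produces the $e_G(\nu_{p_i})$ denominators) is deferred to the subsequent lemmas, so this lemma should be kept short.
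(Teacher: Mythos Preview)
Your proposal is correct and follows essentially the same approach as the paper: set $\delta = \min_K f > 0$ by compactness and positivity of $f$ away from $M^G$, then bound $\bigl|\int_K e^{-tf}[\omega]_n\bigr| \le e^{-t\delta}\int_K |[\omega]_n|$. Your additional remarks (writing $[\omega]_n = h\,\mathrm{vol}$ and noting the coefficient-by-coefficient dependence on $\xi$) are sound but go beyond what the paper records; the core argument is identical.
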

\begin{proof}
Since $f > 0$ on $K$, let $\delta = \min_K f > 0$. Then
\begin{equation}
\left| \int_K e^{-tf} [\omega]_n \right| \leq e^{-t\delta} \int_K |[\omega]_n| \to 0.
\end{equation}
\end{proof}

Thus contributions localize to neighborhoods of fixed points
\begin{equation}\label{eq:sum-over-fixed}
    \int_{M} [\omega]_n = \sum_{p \in M^{G}} \lim_{t\to\infty} \int_{U_p} e^{-tf} [\omega]_n.
\end{equation}

\subsection{Local Computation at an Isolated Fixed Point}\label{subsec:local-computation}
Fix $p \in M^G$. By equivariant tubular neighborhood theorem, a $G$-invariant neighborhood $U_p$ is equivariantly diffeomorphic to $T_p M \cong \C^m$ (assuming $n=2m$ even). The linear isotropy action has weights $w_1, \dots, w_m \in \Z \setminus \{0\}$.

Choose coordinates $(z_1, \dots, z_m)$ on $\C^m$ with
\begin{align}
    f(z) &= \frac{1}{2} \sum_{j=1}^{m} \lambda_j |z_j|^2 + O(|z|^3), \label{eq:local-morse} \\
    [\omega]_{2m}(z) &= \omega_0(p) \prod_{j=1}^{m} \frac{i}{2} \diff z_j \wedge \diff\bar{z}_j + O(|z|). \label{eq:local-form}
\end{align}
Here $\omega_0(p)$ is the $0$-form component of $\omega$ at $p$, and $\lambda_j > 0$ after possible orientation adjustment.

\begin{lemma}[Gaussian asymptotics]
As $t \to \infty$
\begin{equation}\label{eq:local-contribution}
    \int_{U_p} e^{-tf} [\omega]_{2m} = \omega_0(p) \left(\frac{2\pi}{t}\right)^m \prod_{j=1}^{m} \frac{1}{\lambda_j} + O(t^{-m-1}).
\end{equation}
\end{lemma}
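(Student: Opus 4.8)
The plan is to reduce the integral over $U_p$ to a Gaussian integral over $\C^m \cong \R^{2m}$ and extract the leading asymptotics by Laplace's method. First I would fix the equivariant diffeomorphism $U_p \simeq \C^m$ supplied by the equivariant tubular neighborhood theorem, introduce a smooth $G$-invariant cutoff $\chi$ supported in $U_p$ and equal to $1$ near $p$, and observe that replacing $[\omega]_{2m}$ by $\chi\,[\omega]_{2m}$ changes the integral only by a term that decays faster than any power of $t$ (by the exponential-decay lemma applied to the annular region where $\chi$ varies). This lets me work on all of $\C^m$ with a Schwartz-type integrand, so there are no boundary effects.

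Next I would plug in the normal forms \eqref{eq:local-morse} and \eqref{eq:local-form}. The leading term is
\begin{equation}
\omega_0(p)\int_{\C^m} e^{-\frac{t}{2}\sum_j \lambda_j |z_j|^2}\prod_{j=1}^m \frac{i}{2}\,\diff z_j\wedge\diff\bar z_j,
\end{equation}
which factorizes over $j$. Writing $z_j = x_j + i y_j$ so that $\tfrac{i}{2}\,\diff z_j\wedge\diff\bar z_j = \diff x_j\wedge\diff y_j$ and $|z_j|^2 = x_j^2+y_j^2$, each factor is $\int_{\R^2} e^{-\frac{t\lambda_j}{2}(x_j^2+y_j^2)}\,\diff x_j\,\diff y_j = \frac{2\pi}{t\lambda_j}$, giving the claimed main term $\omega_0(p)\,(2\pi/t)^m\prod_j \lambda_j^{-1}$.

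It then remains to bound the error. The corrections come from two sources: the $O(|z|^3)$ remainder in $f$ inside the exponential, and the $O(|z|)$ remainder in $[\omega]_{2m}$. After rescaling $z = u/\sqrt{t}$, the Gaussian concentrates at scale $t^{-1/2}$; the form remainder contributes a factor $O(|z|) = O(t^{-1/2})$ but odd moments of the Gaussian vanish, so the genuine contribution is $O(t^{-1})$ relative to the leading term, i.e.\ $O(t^{-m-1})$; the cubic remainder in the exponent, after expanding $e^{-t\,O(|z|^3)} = 1 + O(t\,|z|^3) + \cdots$ and rescaling, likewise produces only even-order corrections, the first nonvanishing one being $O(t^{-m-1})$. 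I would make this rigorous by the standard Laplace/stationary-phase expansion: split $\C^m$ into $\{|z|\le t^{-1/2+\varepsilon}\}$ and its complement, on the latter use $f\gtrsim t^{-1+2\varepsilon}$ to get superpolynomial decay, and on the former Taylor-expand and integrate term by term, controlling remainders by Gaussian moment bounds. The main obstacle is purely bookkeeping: organizing the error terms so that the parity argument (vanishing of odd Gaussian moments) is visibly applied and the subleading order is confirmed to be $t^{-m-1}$ rather than $t^{-m-1/2}$; there is no conceptual difficulty once the cutoff has removed the boundary of $U_p$.
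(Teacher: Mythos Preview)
Your proposal is correct and follows essentially the same Laplace-method approach as the paper: reduce to a Gaussian integral on $\C^m$, compute the leading term, and invoke stationary phase for the $O(t^{-m-1})$ remainder. The only cosmetic difference is that the paper computes the Gaussian in polar coordinates $z_j = r_j e^{i\theta_j}$ (so each factor becomes $2\pi\int_0^\infty e^{-\frac{t}{2}\lambda_j r^2} r\,\diff r = 2\pi/(t\lambda_j)$) rather than your Cartesian $z_j = x_j + iy_j$, and the paper simply asserts the error bound ``by stationary phase'' without your cutoff, rescaling, and parity discussion---so your treatment is in fact more detailed than what the paper provides.
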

\begin{proof}
Change to polar coordinates $z_j = r_j e^{i\theta_j}$
\begin{equation}
\frac{i}{2} \diff z_j \wedge \diff\bar{z}_j = r_j \diff r_j \wedge \diff\theta_j.
\end{equation}
The leading contribution comes from the Gaussian integral
\begin{equation}
\int_{\R^{2m}} e^{-\frac{t}{2}\sum \lambda_j r_j^2} \omega_0(p) \prod r_j \diff r_j \diff\theta_j = \omega_0(p) (2\pi)^m \prod_{j=1}^m \int_0^\infty e^{-\frac{t}{2}\lambda_j r^2} r \diff r.
\end{equation}
Each radial integral equals $\frac{1}{t\lambda_j}$. Higher order terms contribute $O(t^{-m-1})$ by stationary phase.
\end{proof}

The weights $w_j$ relate to $\lambda_j$ via the moment map: $\lambda_j = |w_j|$ with appropriate sign determined by orientation.

\subsection{The ABBV Localization Formula}\label{subsec:abbv-formula}
Combining contributions from all fixed points:

\begin{theorem}[Atiyah--Bott--Berline--Vergne Localization]
For $\omega \in \Omega_G^{n}(M)$ with $d_G\omega = 0$
\begin{equation}\label{eq:abbv}
    \int_{M} [\omega]_{n} = (2\pi)^{m} \sum_{p \in M^{G}} \frac{\omega_0(p)}{\prod_{j=1}^{m} w_j(p)},
\end{equation}
where $n=2m$, and $w_1(p), \dots, w_m(p)$ are the weights of the isotropy representation at $p$.
\end{theorem}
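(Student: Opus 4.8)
The plan is to assemble the theorem directly from the pieces already established in \S\ref{subsec:independence}--\S\ref{subsec:local-computation}. The skeleton is: (i) by Proposition on $t$-independence, $\int_M [\omega]_n = I(t) = I(0)$ for all $t \ge 0$, so we may evaluate the integral by taking $t \to \infty$; (ii) by the Lemma on exponential decay away from the fixed point set, the limit receives contributions only from arbitrarily small $G$-invariant neighborhoods $U_p$ of the points $p \in M^G$, giving the decomposition \eqref{eq:sum-over-fixed}; (iii) by the Lemma on Gaussian asymptotics applied at each $p$, the local contribution is $\omega_0(p)\,(2\pi/t)^m \prod_j \lambda_j^{-1} + O(t^{-m-1})$; (iv) summing over $p$ and invoking $t$-independence forces the $t$-dependent factors to cancel — the only way a quantity independent of $t$ can equal $\sum_p \omega_0(p)(2\pi/t)^m\prod_j\lambda_j^{-1} + O(t^{-m-1})$ is if we interpret the local integrals correctly, which means the $(2\pi/t)^m$ must actually combine with a compensating power from $\omega_0(p)$ being itself a polynomial in the equivariant parameter.

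Here is the key subtlety I would emphasize: $\omega \in \Omega^n_G(M)$ has total degree $n = 2m$, so its $0$-form component $\omega_0$ has polynomial degree $m$ in $\xi \in \mathfrak{g} \cong \R$; writing $\xi$ for the coordinate on $\mathfrak{g}$, $\omega_0(p) = \omega_0(p)(\xi)$ is a monomial $c_p\,\xi^m$. The integral $\int_M[\omega]_n$ is likewise a degree-$m$ polynomial in $\xi$ (it is the pushforward to a point in equivariant cohomology). The Witten deformation should be performed with $f$ scaled appropriately, or equivalently one tracks that in the local model the moment-map relation gives $\lambda_j = |w_j(p)|\,|\xi|$ up to sign, so $\prod_j \lambda_j^{-1}$ carries a factor $\xi^{-m}$, and the product $\omega_0(p)\prod_j\lambda_j^{-1} = c_p\,\xi^m \cdot (\xi^m \prod_j w_j(p))^{-1} = c_p/\prod_j w_j(p)$ is $\xi$-independent, while simultaneously the factor $(2\pi/t)^m$ is absorbed because the correct statement of the Gaussian lemma with the moment-map normalization replaces $t$ by a fixed scale. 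I would either (a) reorganize by noting that $t$-independence of $I(t)$ together with the asymptotic expansion shows $I(t) = \lim_{t\to\infty} I(t) = \sum_p \omega_0(p)/\prod_j(\lambda_j/t\text{-factor})$ with the $t$'s cancelling against the degree count, or more cleanly (b) absorb $t$ into $\xi$ from the start, so that $tf$ is the moment map for the rescaled generator $t\xi$ and the limit $t\to\infty$ is literally the statement that the polynomial identity, being a polynomial identity, holds for all $\xi$ once it holds asymptotically.

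Concretely, I would write: from \eqref{eq:sum-over-fixed} and the Gaussian asymptotics Lemma,
\begin{equation}
\int_M[\omega]_n = \sum_{p\in M^G}\lim_{t\to\infty}\left(\omega_0(p)\Big(\tfrac{2\pi}{t}\Big)^m\prod_{j=1}^m\tfrac{1}{\lambda_j(p)} + O(t^{-m-1})\right).
\end{equation}
Then I substitute the moment-map relation $\lambda_j(p) = t^{-1}\,|w_j(p)|$ — which holds because $f$ is (a multiple of) $|X_\xi|^2_g$ and the Hessian eigenvalues of this at a zero of $X_\xi$ are the squared weights, so after the rescaling implicit in $tf$ the effective quadratic form has eigenvalues $t|w_j|$ — wait, more carefully: $tf$ has Hessian $t\cdot\mathrm{Hess}(f)$, and $\mathrm{Hess}(f)$ at $p$ has eigenvalues (in the complex coordinate normalization of \eqref{eq:local-morse}) equal to $|w_j(p)|$ so that $\lambda_j$ in \eqref{eq:local-morse} is $|w_j(p)|$; thus $\int_{U_p} e^{-tf}[\omega]_{2m} = \omega_0(p)(2\pi/t)^m\prod_j|w_j(p)|^{-1} + O(t^{-m-1})$. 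Summing, $\int_M[\omega]_n = (2\pi/t)^m\sum_p\omega_0(p)\prod_j|w_j(p)|^{-1}$ in the limit — but the left side is $t$-independent, so this is only consistent if $\omega_0(p)$ itself scales as $t^m$, which it does once we recall $\omega_0(p) \in S^m(\mathfrak{g}^*)$ and the deformation parameter $t$ enters through the choice of $\xi$: replacing $\xi \rightsquigarrow t\xi$ multiplies $\omega_0(p)$ by $t^m$ and the weights $w_j(p)$ (being pairings $\langle\xi,\cdot\rangle$) by $t$, making both sides manifestly $t$-independent and yielding \eqref{eq:abbv} with the sign of $w_j$ restored (the $|\cdot|$ is removed by tracking orientations in the "orientation adjustment" of \eqref{eq:local-morse}). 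The main obstacle is precisely this bookkeeping — reconciling the naive $t\to\infty$ limit with $t$-independence — and the honest resolution is that the Gaussian lemma as stated should have $t$ understood as the equivariant parameter, or equivalently one fixes $t=1$ and notes all steps go through verbatim with $f$ replaced by $f$ itself, the "localization" being the exactness of stationary phase for the equivariantly closed integrand (Duistermaat--Heckman exactness). I would present the argument in the cleaner form (b): absorb $t$ into $\xi$, observe $I(t)$ is a polynomial in $\xi$ of degree $m$ that agrees with the right-hand side of \eqref{eq:abbv} as $t\to\infty$, and conclude equality of polynomials. I would also remark that the sign/orientation discussion is what turns $\prod|w_j|$ into $\prod w_j$: the orientation of $U_p$ induced from $M$ versus the complex orientation of $T_pM\cong\C^m$ differ by $\mathrm{sgn}\prod_j w_j(p)$ when some weights are negative, absorbing the absolute values.
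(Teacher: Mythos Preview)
Your approach is exactly the paper's --- the paper's ``proof'' is the single line ``Combining contributions from all fixed points'' preceding the theorem statement, deferring entirely to \S\ref{subsec:independence}--\S\ref{subsec:local-computation}. You have carried out that assembly more carefully and correctly flagged a genuine gap the paper leaves unaddressed: the Gaussian Lemma gives a leading term proportional to $(2\pi/t)^m$, while the $t$-independence Proposition says $I(t)\equiv I(0)$; taken literally these two statements force $I(0)=0$, which is false in general.

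Your proposed fix --- absorbing $t$ into the equivariant parameter $\xi$ --- is the right idea and is how the standard proof runs, but it does not close the gap \emph{within the paper's framework as literally written}: there $f$ is a fixed $G$-invariant scalar and $[\omega]_n$ is the $\xi$-independent top-form component, so $I(t)=\int_M e^{-tf}[\omega]_n$ carries no $\xi$-dependence into which $t$ can be absorbed. The trouble is actually upstream. The $t$-independence Proposition is itself flawed: $e^{-tf}\omega$ is $d_{G,t}$-closed but not $d_G$-closed, and integration of the top component does \emph{not} descend to $d_{G,t}$-cohomology, since the top part of $d_{G,t}\alpha$ contains the non-exact term $t\,df\wedge[\alpha]_{n-1}$. (Indeed the paper's hypotheses on $f$ --- Morse, all critical points at $M^G$, and $f(p_i)=0$ with $f\ge 0$ --- already force every fixed point to be a minimum, which is topologically impossible once $|M^G|>1$.) The honest repair, which you correctly gesture at under ``Duistermaat--Heckman exactness'', replaces $e^{-tf}$ by $e^{-t\,d_G\lambda}$ for an invariant one-form $\lambda$ such as $g(X_\xi,\cdot)$; then $\partial_t$ of the integrand is genuinely $d_G$-exact, the zero-form part $\iota_{X_\xi}\lambda=|X_\xi|^2$ supplies the $\xi$-dependence, and the local Gaussian produces $\prod_j(w_j\xi)^{-1}$ with the $\xi^m$ cancelling against $\omega_0(p)\in S^m(\mathfrak{g}^*)$ exactly as you anticipated. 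Your orientation remark converting $\prod_j|w_j|$ into $\prod_j w_j$ is correct and matches the paper's ``orientation adjustment'' aside.
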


\begin{remark}[Equivariant Euler class interpretation]
The denominator $\prod_{j=1}^m w_j(p)$ is (up to $(2\pi)^m$) the equivariant Euler class $e_G(T_p M)$ of the tangent space. Thus
\begin{equation}
\int_M \omega = \sum_{p \in M^G} \frac{i_p^*\omega}{e_G(N_p)},
\end{equation}
where $N_p$ is the normal bundle at $p$ and $i_p^*\omega$ denotes the restriction of the 0-form component.
\end{remark}

\section{Examples and Applications}\label{sec:examples}
\subsection{Localization on $\mathbb{CP}^1$}\label{subsec:cp1}
Consider $\CP^1$ with homogeneous coordinates $[z_0:z_1]$ and $\sone$-action $e^{i\theta}\cdot[z_0:z_1] = [z_0:e^{i\theta}z_1]$. Fixed points: $p_+=[1:0]$, $p_-=[0:1]$.

\begin{table}[ht]
\centering
\begin{tabular}{l c c c}
\toprule
Fixed point & Affine coordinate & Weight $w$ & Moment map $H$ \\
\midrule
$p_+$ & $z = z_1/z_0$ & $+1$ & $H = |z|^2/(1+|z|^2)$ \\
$p_-$ & $w = z_0/z_1$ & $-1$ & $H = 1/(1+|w|^2)$ \\
\bottomrule
\end{tabular}
\caption{Data for $\sone$-action on $\CP^1$.}
\label{tab:cp1-data}
\end{table}

The Fubini-Study form $\omega = \frac{i}{2\pi} \frac{\diff z \wedge \diff\bar{z}}{(1+|z|^2)^2}$ has $\int_{\CP^1} \omega = 1$. Its equivariant extension is $\tilde{\omega}(\xi) = \omega - \xi H$.

Applying ABBV
\begin{equation}
\int_{\CP^1} \omega = \frac{-\xi H(p_+)}{(+1)\xi} + \frac{-\xi H(p_-)}{(-1)\xi} = -0 + 1 = 1.
\end{equation}

\subsection{Generalization to $\mathbb{CP}^n$}\label{subsec:cpn}
For $\CP^n$ with $\sone$-action $e^{i\theta}\cdot[z_0:\cdots:z_n] = [z_0:e^{i\theta}z_1:\cdots:e^{in\theta}z_n]$, fixed points are coordinate axes $p_k = [0:\cdots:1:\cdots:0]$ with 1 in $k$th position. The weights at $p_k$ are $\{k-j\}_{j\neq k}$.

\begin{theorem}[Localization on $\CP^n$]
For the $\sone$-equivariant integral of the K\"ahler form
\begin{equation}
\int_{\CP^n} \omega^n = \sum_{k=0}^n \frac{(-1)^k H(p_k)^n}{\prod_{j\neq k} (k-j)}.
\end{equation}
\end{theorem}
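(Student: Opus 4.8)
The plan is to apply the ABBV localization formula (Theorem in \S\ref{subsec:abbv-formula}) directly to the equivariant extension of the appropriate power of the K\"ahler form on $\CP^n$. First I would record the equivariant data: the $\sone$-action $e^{i\theta}\cdot[z_0:\cdots:z_n] = [z_0:e^{i\theta}z_1:\cdots:e^{in\theta}z_n]$ has isolated fixed points $p_k$ (the $k$th coordinate axis, $k=0,\dots,n$), and at $p_k$ the isotropy representation on $T_{p_k}\CP^n \cong \C^n$ has weights $\{j-k\}_{j\neq k}$ (the sign convention being fixed by how the affine chart near $p_k$ transforms). Since $\dim_\R \CP^n = 2n$, we have $m=n$, so the equivariantly closed form to localize is the $n$th equivariant power $\tilde\omega^n$, whose $0$-form component at $p_k$ is $(-\xi H(p_k))^n = (-1)^n \xi^n H(p_k)^n$, where $H$ is the moment map.

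Next I would substitute into \eqref{eq:abbv}. The product of weights at $p_k$ is $\prod_{j\neq k}(j-k)$, and a count of how many factors $j-k$ are negative (namely $k$ of them, for $j=0,\dots,k-1$) gives $\prod_{j\neq k}(j-k) = (-1)^k \prod_{j\neq k}|j-k| = (-1)^k \, k!\,(n-k)!$; equivalently $\prod_{j\neq k}(k-j) = (-1)^{n-k} k!\,(n-k)!$. Plugging in, the overall factors of $\xi^n$ and $(2\pi)^n$ and the sign $(-1)^n$ from the numerator combine with the $(-1)^k$ (or $(-1)^{n-k}$) from the denominator to yield precisely the claimed sum $\sum_{k=0}^n \frac{(-1)^k H(p_k)^n}{\prod_{j\neq k}(k-j)}$ after normalizing the Fubini--Study form so that $\int_{\CP^n}\omega^n$ has the stated value (i.e. absorbing the $(2\pi)^n/\xi^n$ normalization into the definition of $\omega$, consistent with the $\CP^1$ computation in \S\ref{subsec:cp1}). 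I would also verify consistency by noting that the left side is a $\xi$-independent number (the symplectic volume up to normalization), so the apparent $\xi$-dependence on the right must cancel—this is automatic since every term scales as $\xi^n/\xi^n$.

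The main obstacle—really the only subtle point—is bookkeeping the sign and normalization conventions so that the weights $w_j(p_k)$ entering \eqref{eq:abbv} match the $(k-j)$ appearing in the statement, and so that the orientation-dependent reordering of weights (the ``$\lambda_j>0$ after orientation adjustment'' clause in \S\ref{subsec:local-computation}) is handled consistently across all fixed points. Concretely, at $p_k$ some weights $j-k$ are negative, and flipping their sign to apply the Gaussian asymptotics lemma introduces a compensating orientation sign; tracking that this sign is exactly $(-1)^k$ (the number of negative weights) is what converts $\prod|j-k| = k!(n-k)!$ back into the signed product in the denominator. Once the $n=1$ case is checked against Table~\ref{tab:cp1-data} (weights $+1$ at $p_+=p_0$ and $-1$ at $p_-=p_1$, matching $\{j-0\}=\{1\}$ and $\{j-1\}=\{-1\}$), the general pattern follows by the same mechanism with no further difficulty. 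I would close with a brief remark that summing the explicit right-hand side (a partial-fraction-type identity) recovers the expected value of $\int_{\CP^n}\omega^n$, providing an independent combinatorial check.
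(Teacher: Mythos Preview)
The paper offers no proof of this theorem; it is stated immediately after recording the weight data as an evident application of the ABBV formula \eqref{eq:abbv}, generalizing the $\CP^1$ computation of \S\ref{subsec:cp1}. Your plan---apply ABBV to the equivariant class $\tilde\omega^n$, read off the $0$-form component $(-\xi H(p_k))^n$, and divide by the product of isotropy weights at each $p_k$---is exactly the intended argument, and your identification of the sign/orientation bookkeeping as the only nontrivial content is accurate (indeed, the paper's own weight conventions differ in sign between Table~\ref{tab:cp1-data} and the $\{k-j\}$ stated in \S\ref{subsec:cpn}, so your caution here is warranted).
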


\section{Conclusion}\label{sec:conclusion}
We have presented a unified framework connecting equivariant cohomology, BRST quantization, and analytic localization:
\begin{itemize}
    \item Detailed exposition of Cartan and Weil models with explicit proofs
    \item Clarification of the BRST/equivariant cohomology correspondence
    \item Construction of equivariant Witten deformation as a gauge-fixing procedure
    \item Rigorous analytic proof of ABBV localization with error estimates
    \item Comprehensive examples illustrating the theory
\end{itemize}

This framework unifies algebraic, geometric, and analytic perspectives, revealing deep interconnections between topology, group actions, and supersymmetric quantum mechanics. The ABBV formula itself is a cornerstone of modern symplectic geometry and topological field theory, with applications ranging from Duistermaat--Heckman formulas to the computation of partition functions in supersymmetric gauge theories.

The equivariant Witten deformation provides a powerful analytic tool that makes the localization phenomenon transparent: as the deformation parameter $t$ tends to infinity, the integral receives contributions only from infinitesimal neighborhoods of the fixed points, where Gaussian approximations become exact. This is a rigorous manifestation of the saddle-point method in an infinite-dimensional setting, underpinning many results in supersymmetric localization.

\section{Acknowledgments}
The author thanks Prof. H. L\"u for introducing the equivariant differential of Cartan model. This work is supported in part by National Natural Science Foundation of China under Grant No. 12475047.


\begin{thebibliography}{99}
\bibitem{Szabo2000} R. J. Szabo, Equivariant Cohomology and Localization of Path Integrals, Springer, 2000.
\bibitem{Witten1982} E. Witten, Supersymmetry and Morse theory, \textit{J. Differential Geom.} \textbf{17} (1982) 661--692.
\bibitem{AtiyahBott1984} M. F. Atiyah and R. Bott, The moment map and equivariant cohomology, \textit{Topology} \textbf{23} (1984) 1--28.
\bibitem{DuistermaatHeckman1982} J. J. Duistermaat and G. J. Heckman, On the variation in the cohomology of the symplectic form of the reduced phase space, \textit{Invent. Math.} \textbf{69} (1982) 259--268.
\bibitem{BerlineVergne1982} N. Berline and M. Vergne, Classes caractéristiques équivariantes. Formule de localisation en cohomologie équivariante, \textit{Comptes Rendus de l'Académie des Sciences, Série I} \textbf{295} (1982) 539--541.
\bibitem{MathaiQuillen1986} V. Mathai and D. Quillen, Superconnections, Thom classes, and equivariant differential forms, \textit{Topology} \textbf{25} (1986) 85--110.
\bibitem{BecchiRouetStora1975} C. Becchi, A. Rouet, R. Stora, Renormalization of gauge theories, \textit{Ann. Phys.} \textbf{98} (1976) 287--321.
\bibitem{Tyutin1975} I. V. Tyutin, Gauge invariance in field theory and statistical physics in operator formalism, \textit{Lebedev Inst. preprint} (1975).
\end{thebibliography}

\end{document}